\documentclass[1p]{elsarticle}
\usepackage{amsmath,amssymb,amsthm,mathtools}
\usepackage{booktabs,array}
\usepackage{microtype}
\usepackage[hidelinks]{hyperref}
\usepackage{enumitem}

\newtheorem{theorem}{Theorem}[section]
\newtheorem{lemma}[theorem]{Lemma}
\newtheorem{corollary}[theorem]{Corollary}
\theoremstyle{definition}
\newtheorem{definition}[theorem]{Definition}
\newtheorem{remark}[theorem]{Remark}

\newcommand{\N}{\mathbb N}
\newcommand{\SR}{\mathsf{SR}}
\newcommand{\BSR}{\mathsf{BSR}}
\newcommand{\D}{\mathsf D}
\newcommand{\Ex}{\exists}
\newcommand{\All}{\forall}
\newcommand{\Alt}{\mathsf A}
\newcommand{\Cnt}{\#}
\newcommand{\DTIME}{\mathsf{DTIME}}
\newcommand{\NTIME}{\mathsf{NTIME}}
\newcommand{\ATIME}{\mathsf{ATIME}}
\newcommand{\DSPACE}{\mathsf{DSPACE}}
\newcommand{\NSPACE}{\mathsf{NSPACE}}
\newcommand{\ASPACE}{\mathsf{ASPACE}}
\newcommand{\co}{\mathsf{co}}
\newcommand{\Pclass}{\mathsf P}
\newcommand{\NP}{\mathsf{NP}}
\newcommand{\PSPACE}{\mathsf{PSPACE}}
\newcommand{\EXP}{\mathsf{EXP}}
\newcommand{\NEXP}{\mathsf{NEXP}}
\newcommand{\EXPSPACE}{\mathsf{EXPSPACE}}
\newcommand{\Lclass}{\mathsf L}
\newcommand{\NL}{\mathsf{NL}}
\newcommand{\Eclass}{\mathsf E}
\newcommand{\NE}{\mathsf{NE}}
\newcommand{\ESPACE}{\mathsf{ESPACE}}

\newcommand{\TISP}{\mathsf{TISP}}
\newcommand{\dotminus}{\mathbin{\dot{-}}}

\journal{Information and Computation}

\begin{document}
\begin{frontmatter}

\title{Step Recursion: Resource Profiles and Descent Quotients}
\author{Kirill Osipov}

\begin{abstract}
We develop a resource representation for step recursion in which mutable-state width and recursion descent are explicit and independent parameters. A width bound $u$ controls the size of the encoded machine state, while an effective descent $\rho$ determines the available recursion depth $\delta_\rho(u)$. For generalized-inverse descents, we derive the depth directly from generator growth and characterize the increasing sequences that can occur as generator orbits.

We then connect this depth--width geometry to standard finite-branching computation. Every deterministic bounded-state dynamics is realizable by a single ordinary bounded step recursion over a fixed finite numerical basis. Using deterministic, existential, universal, or alternating aggregation on the same local dynamics yields the corresponding machine semantics. After closure under the width reparameterizations needed to absorb fixed local cost, the resulting language classes are exactly the machine time--space classes on profiles $(\delta_\rho(u),u)$.

Finally, profile domination quotients effective descents by admissible width reparameterization.  Some depth curves collapse, yet polynomial widths support an explicit infinite strict hierarchy between the canonical polynomial- and exponential-depth profiles.  Thus descent remains a nonredundant resource coordinate after polynomial width reparameterization; standard complexity classes are calibration points.
\end{abstract}

\begin{keyword}
bounded recursion \sep computational complexity \sep time--space complexity \sep alternation \sep subrecursive hierarchies \sep machine simulation
\end{keyword}

\end{frontmatter}

\section{Introduction}

Recursion-theoretic descriptions of complexity traditionally obtain machine bounds from restrictions on recursion, composition, or term formation.  Classical examples range from Cobham and Thompson to Wagner and Bellantoni--Cook, with later recursion schemes for nondeterminism, alternation, and counting \cite{Cobham1965,Thompson1972,Wagner1979,BellantoniCook1992,Oitavem2013,Oitavem2022,VollmerWagner1996,DalLagoKahleOitavem2022}.  Wagner-style function algebras can already parameterize general time--space bounds; see Clote's survey \cite{Clote1999}.  Implicit computational complexity instead derives resource bounds from structural restrictions rather than explicit annotations \cite{DalLago2022}.  The present paper asks a different representation question: can finite-branching resource geometry be exposed directly through state width and recursion descent while the bounded-state schema and finite local transition basis remain fixed?

The distinction is visible already for a binary clock of width $u$.  Under
\[
 y\mapsto\lfloor y/2^l\rfloor
 \qquad\text{and}\qquad
 y\mapsto y\dotminus l,
\]
the same $u$-bit clock supports respectively $\Theta(u)$ and $\Theta(2^u)$ recursive steps.  We therefore separate three ingredients:
\[
 \boxed{\text{state width}\ \times\ \text{descent}\ \times\ \text{branch semantics}.}
\]
Width bounds mutable state and clock size.  Descent converts clock size into path length.  Deterministic, existential, universal, or alternating aggregation supplies the Boolean computation-tree semantics; counting is treated as a further aggregation extension.  The local transition algebra does not change when these resource coordinates change.

The formal object is the bounded-state step-recursion dynamics of Definition~\ref{def:bsr}.  Each instance fixes an effective descent $\rho$, a width $u$, and finitely many local maps and guards.  Lemma~\ref{lem:realize} shows that every deterministic instance is realized by one ordinary bounded step recursion over a fixed finite numerical basis.  Independently, Boolean aggregation gives machine semantics to the same local data, and every fixed finite-branching machine transition compiles back into those maps.  Branching is thus an orthogonal aggregation rule; it is not identified with a single-value recursion call.

For generalized-inverse descents, the threshold law from \cite{Osipov2026} converts generator growth into recursion depth.  The present paper uses that construction as input and develops its machine-resource consequences.  Theorem~\ref{thm:orbit} characterizes exactly which increasing sequences occur as generalized-inverse generator orbits.  Theorem~\ref{thm:exactprofile} is the central machine theorem: after the controlled width closure needed to absorb the fixed bit-cost of local operations, the $Q$-semantics of step recursion is exactly the machine class on the depth--width profiles
\[
 (\delta_\rho(u),u).
\]
Equivalently, the construction factors resource behavior through
\[
 \text{generator orbit}\longrightarrow\text{descent depth}\longrightarrow
 (\delta_\rho(u),u)\longrightarrow\text{machine resource family}.
\]

This factorization also gives a comparison theory for descents.  Profile domination induces a preorder allowing width change; its quotient removes distinctions absorbable by enlarging mutable state.  Some pointwise depth curves therefore collapse, whereas Theorem~\ref{thm:quotient-hierarchy} gives an infinite strict chain on polynomial widths between the canonical polynomial- and exponential-depth profiles.  Because strictness is tested against every polynomial width reparameterization, these are genuinely different descent resources rather than artifacts of a width encoding.

General simultaneous time--space algebras already exist in the Wagner tradition \cite{Wagner1979,Clote1999}, while other recursion schemes treat nondeterminism, alternation, and counting \cite{Oitavem2013,Oitavem2022,Skapinakis2024,VollmerWagner1996,DalLagoKahleOitavem2022}, and discrete-ODE approaches obtain complexity characterizations by syntactic restrictions \cite{BournezDurand2019,AntonelliSkapinakis2026}.  Our novelty is therefore not another general time--space characterization.  The local transition schema is fixed while width and descent vary independently, making descents comparable modulo width reparameterization and producing a quotient with both collapses and strict intermediate levels.  Width remains explicit, so this is a resource representation rather than an implicit complexity system; unrestricted numerical composition is a separate issue \cite{AraiEguchi2008}.

Classical complexity classes are checks rather than definitions: the canonical rows recover polynomial time and space, logarithmic-space analogues, fixed exponential towers, alternation, and $\#\Pclass$ from the general profile theorems.

\section{Definitions}

\subsection{Machines and complexity notation}

Inputs are strings $x\in\{0,1\}^*$ and $N:=|x|+2$.  Machines have read-only indexed input and finitely many work tapes.  A configuration stores the finite control, work tapes and heads, and $O(\log N)$-bit input pointers.  We use the usual decider convention throughout: every computation branch halts.

There are four branch modes $Q\in\{\D,\Ex,\All,\Alt\}$.  Deterministic machines have one successor. Existential machines accept if some branch accepts. Universal machines accept if every branch accepts. Alternating machines label each nonterminal configuration by $\exists$ or $\forall$ and use the corresponding rule.

We write $\DTIME(t),\NTIME(t),\ATIME(t)$ and $\DSPACE(s),\NSPACE(s),\ASPACE(s)$ for the resulting deterministic, existential, and alternating decider classes. Universal classes are the complements of the corresponding existential classes.  For a class $\mathcal C$, $\co\mathcal C:=\{L:\overline L\in\mathcal C\}$.  For a branch mode $Q$, write $\TISP_Q(t,s)$ for languages decided in mode $Q$ with path time $O(t)$ and work space $O(s)$ under this convention.
All resource inequalities and domination statements are understood asymptotically.  Finite exceptional input lengths are harmless: a machine can hard-code them in its finite control, and a step-recursion term can handle them by a fixed finite case distinction in the initialization.  Thus a bound that holds for all sufficiently large $N$ suffices throughout.

\begin{lemma}[Total-space path bound]\label{lem:total}
Let $M$ be a total finite-branching machine of any branch mode $Q$ using at most $s(N)$ work cells, where $s(N)\ge \lceil\log_2 N\rceil+1$.  Every branch has length at most $2^{O(s(N))}$.
\end{lemma}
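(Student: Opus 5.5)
The plan is to count configurations and then argue that, on a total machine, no branch can revisit a configuration. First, I would bound the number of configurations of $M$ on an input of length $|x|=N-2$. A configuration consists of the finite control state (a constant number $c_1$ of choices), the contents of the $k$ work tapes restricted to the at most $s(N)$ cells used ($|\Gamma|^{k s(N)}$ choices), the work-head positions ($(s(N)+1)^k$ choices), and the input-head position ($O(N)$ choices, stored in $O(\log N)$ bits). The total count is therefore
\[
 C(N)\le c_1\,|\Gamma|^{k s(N)}\,(s(N)+1)^k\, c_2 N .
\]
The hypothesis $s(N)\ge\lceil\log_2 N\rceil+1$ gives $N\le 2^{s(N)}$, and $(s+1)^k\le 2^{ks}$. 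Hence $C(N)\le 2^{c\,s(N)}$ for a constant $c$ depending only on $M$, for all sufficiently large $N$. The finitely many small lengths are harmless by the convention stated above.

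Second, I would show that no branch repeats a configuration. Suppose some branch passed through the same configuration $\gamma$ at two distinct times $i<j$. Since the transition relation depends only on the current configuration, the segment of the branch from $\gamma$ back to $\gamma$ can be repeated forever. This produces an infinite computation path in the computation tree of $M$ on $x$. That contradicts the standing decider convention that every computation branch halts, which is the meaning of ``total''. The argument does not depend on the branch mode $Q$: for $\exists$, $\forall$ and alternating labels the tree of successors is the same, and only the acceptance rule differs. Hence every branch consists of pairwise distinct configurations, so its length is at most $C(N)\le 2^{O(s(N))}$.

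The main obstacle is making the notion of ``branch'' precise enough that the pumping argument really yields an infinite branch. For nondeterministic and alternating machines, ``every branch halts'' must be read as ``the computation tree has no infinite path''. With that reading, the repeated segment gives such a path by always re-choosing the same successors. A minor point is that some space conventions allow a branch to wander beyond $s(N)$ cells. Here I would read ``using at most $s(N)$ work cells'' as holding on every branch, so that every configuration on every branch belongs to the counted set.
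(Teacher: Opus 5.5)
Your proposal is correct and follows essentially the same route as the paper. The paper likewise bounds the configurations by $2^{O(s(N))}$, with the input pointer absorbed via $s(N)\ge\lceil\log_2 N\rceil+1$, and uses totality to forbid a repeated configuration on a branch, since replaying the same successor choices would give an infinite branch.
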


\begin{proof}
A complete configuration has $O(s(N)+\log N)=O(s(N))$ bits, hence there are at most $2^{O(s(N))}$ configurations.  A branch cannot repeat one: replaying the same finite sequence of successor choices from the repeated configuration would produce an infinite branch, contrary to totality.  Thus every branch is simple in the finite configuration graph and has the stated length.
\end{proof}

Put
\[
 \exp_0(n)=n,
 \qquad
 \exp_{k+1}(n)=2^{\exp_k(n)}.
\]
For $k\ge0$ define
\[
 k\text{-}\mathsf{EXPTIME}:=\bigcup_{d\ge1}\DTIME(\exp_k(N^d)),
 \qquad
 k\text{-}\mathsf{EXPSPACE}:=\bigcup_{d\ge1}\DSPACE(\exp_k(N^d)),
\]
and
\[
 \mathsf N k\text{-}\mathsf{EXPTIME}:=\bigcup_{d\ge1}\NTIME(\exp_k(N^d)),
 \qquad
 \mathsf{coN}k\text{-}\mathsf{EXPTIME}:=\co(\mathsf N k\text{-}\mathsf{EXPTIME}).
\]
We abbreviate
\[
 \Pclass:=0\text{-}\mathsf{EXPTIME},\quad
 \NP:=\mathsf N0\text{-}\mathsf{EXPTIME},\quad
 \PSPACE:=0\text{-}\mathsf{EXPSPACE},
\]
\[
 \EXP:=1\text{-}\mathsf{EXPTIME},\quad
 \NEXP:=\mathsf N1\text{-}\mathsf{EXPTIME},\quad
 \EXPSPACE:=1\text{-}\mathsf{EXPSPACE}.
\]
Also
\[
 \Lclass=\DSPACE(O(\log N)),\qquad
 \NL=\NSPACE(O(\log N)),
\]
\[
 \Eclass=\DTIME(2^{O(N)}),\qquad
 \NE=\NTIME(2^{O(N)}),\qquad
 \ESPACE=\DSPACE(2^{O(N)}).
\]

\subsection{Descent and depth}

An \emph{effective descent} is a nondecreasing map $\rho:\N\to\N$ such that $\rho(0)=0$, $\rho(y)<y$ for $y>0$, and, on a $b$-bit input, $\rho(y)$ is computable in $b^{O(1)}$ time and $O(b)$ space.  Its depth is
\[
 D_\rho(y):=\min\{t:\rho^{[t]}(y)=0\}.
\]
All descents in this paper are effective.  The step descents motivating the construction arise as generalized inverses.  Following~\cite{Osipov2026}, let $\varphi:\N\to\N$ be strictly increasing with $\varphi(z)\ge z+1$ for every $z$, and define
\[
 \rho_\varphi(y):=\min\{z\in\N:\varphi(z)\ge y\}.
\]
The minimum exists and $\rho_\varphi(y)<y$ for $y>0$.  Whenever $\rho_\varphi$ is effective, it is therefore an effective descent in the preceding sense.  We restate the threshold law from~\cite{Osipov2026} for self-containment.

\begin{theorem}[Generalized-inverse threshold law]\label{thm:inverse-depth}
For every $y\in\N$,
\[
 D_{\rho_\varphi}(y)
 =\min\{t\ge0:y\le \varphi^{[t]}(0)\}.
\]
Consequently, for every width term $u$,
\[
 \delta_{\rho_\varphi}(u;N)
 =\min\{t\ge0:2^{u(N)}-1\le \varphi^{[t]}(0)\}.
\]
Thus the iteration growth of $\varphi$ determines the intrinsic path budget of the associated step descent.
\end{theorem}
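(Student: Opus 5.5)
The plan is to prove the first identity by induction on the threshold index, using the Galois-type equivalence that characterizes generalized inverses. The second identity is then obtained by specializing to the largest $u(N)$-bit clock value.

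Write $a_t:=\varphi^{[t]}(0)$. Since $\varphi(z)\ge z+1$, the sequence $(a_t)$ is strictly increasing with $a_0=0$, so it is unbounded. Hence the minimum $T(y):=\min\{t: y\le a_t\}$ exists for every $y$.

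The key observation I would establish first is the adjunction
\[
 \rho_\varphi(y)\le z \iff y\le \varphi(z),
\]
valid for all $y,z\in\N$.

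\begin{itemize}[leftmargin=*]
\item For the direction ($\Leftarrow$): the defining set $\{z:\varphi(z)\ge y\}$ contains $z$, so its minimum is at most $z$.
\item For the direction ($\Rightarrow$): $\varphi$ is increasing, so $\varphi(z)\ge\varphi(\rho_\varphi(y))\ge y$.
\end{itemize}

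Iterating the adjunction gives, by induction on $t$,
\[
 \rho_\varphi^{[t]}(y)=0 \iff y\le a_t .
\]
The base case $t=0$ is trivial. For the inductive step, $\rho_\varphi^{[t+1]}(y)=0$ means $\rho_\varphi^{[t]}(\rho_\varphi(y))=0$. By the inductive hypothesis applied to $\rho_\varphi(y)$, this is equivalent to $\rho_\varphi(y)\le a_t$. By the adjunction with $z=a_t$, that is in turn equivalent to $y\le\varphi(a_t)=a_{t+1}$.

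Taking the minimum over $t$ on both sides yields $D_{\rho_\varphi}(y)=T(y)$. The only care needed is that the induction quantifies over all $y$, so that the hypothesis can be applied at $\rho_\varphi(y)$ rather than at $y$. I would also note that $\rho_\varphi(0)=0$ is consistent with this: $0\le\varphi(0)$ holds, giving $\rho_\varphi(0)\le 0$.

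For the consequence, I would use the definition of $\delta_\rho(u;N)$ as the depth of the maximal $u$-bit clock, namely $D_\rho(2^{u(N)}-1)$. The width depth is defined in Definition~\ref{def:bsr}, which is not reproduced above; I would take it to be the supremum of $D_\rho(y)$ over $y<2^{u(N)}$.

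Because $\rho$ is nondecreasing, $D_\rho$ is nondecreasing, by a straightforward induction using monotonicity of the iterates. So the supremum is attained at $y=2^{u(N)}-1$, and substituting this value into the first identity gives the displayed formula.

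The main obstacle is only this bookkeeping: checking that $\delta$ is indeed evaluated at the top clock value, or else reducing to it via monotonicity of $D_\rho$. The core adjunction argument is routine.
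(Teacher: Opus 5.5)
Your proof is correct and follows the paper's argument: you establish the adjunction $\rho_\varphi(y)\le a\iff y\le\varphi(a)$, iterate it to $\rho_\varphi^{[t]}(y)=0\iff y\le\varphi^{[t]}(0)$, and take the least $t$. The paper defines $\delta_\rho(u;N):=D_\rho(K_u(N))$ directly, with $K_u(N)=2^{u(N)}-1$, so the second identity is pure substitution; your fallback via monotonicity of $D_\rho$ is valid but not needed.
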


\begin{proof}
For every $a,y\in\N$, monotonicity of $\varphi$ gives the adjunction
\[
 \rho_\varphi(y)\le a
 \quad\Longleftrightarrow\quad
 y\le\varphi(a).
\]
Indeed, the left side says that some $z\le a$ satisfies $\varphi(z)\ge y$, which by monotonicity is equivalent to $\varphi(a)\ge y$.  Iterating this equivalence yields
\[
 \rho_\varphi^{[t]}(y)=0
 \quad\Longleftrightarrow\quad
 y\le\varphi^{[t]}(0).
\]
Taking the least such $t$ proves the first identity. Substituting $K_u(N)=2^{u(N)}-1$ gives the second.
\end{proof}

\begin{theorem}[Orbit characterization]\label{thm:orbit}
Let
\[
 0=a_0<a_1<a_2<\cdots
\]
be an increasing integer sequence and put $\Delta_t:=a_{t+1}-a_t$.  The following are equivalent:
\begin{enumerate}[label=(\roman*),leftmargin=2em]
 \item there is a strictly increasing map $\varphi:\N\to\N$ with $\varphi(z)\ge z+1$ and
 \[
  \varphi^{[t]}(0)=a_t\qquad(t\ge0).
 \]
 \item the gaps are nondecreasing: $\Delta_{t+1}\ge\Delta_t$ for every $t$.
\end{enumerate}
For every such sequence, the explicit generator constructed below has generalized-inverse depth
\[
 D_{\rho_{\varphi_a}}(y)=\min\{t:y\le a_t\}.
\]
If this generalized inverse is effective, the sequence therefore defines an admissible step-recursion resource profile.
\end{theorem}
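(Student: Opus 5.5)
The plan is to prove the equivalence by elementary arguments on integer intervals, build the explicit generator $\varphi_a$ while proving (ii)$\Rightarrow$(i), and then obtain the depth formula directly from Theorem~\ref{thm:inverse-depth}.

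\emph{(i)$\Rightarrow$(ii).} I use the fact that a strictly increasing map $\N\to\N$ satisfies $\varphi(y)-\varphi(x)\ge y-x$ for $x\le y$. Induct on $y-x$: each unit step raises the value by at least $1$, since values are integers. Apply this with $x=a_t$ and $y=a_{t+1}$. Since $\varphi(a_t)=\varphi^{[t+1]}(0)=a_{t+1}$ and $\varphi(a_{t+1})=a_{t+2}$, we get $\Delta_{t+1}\ge\Delta_t$. The hypothesis $\varphi(z)\ge z+1$ is not needed for this direction.

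\emph{(ii)$\Rightarrow$(i).} Since the $a_t$ are strictly increasing integers with $a_0=0$, the half-open blocks $[a_t,a_{t+1})$ partition $\N$. Every $z$ therefore has a unique form $z=a_t+j$ with $0\le j<\Delta_t$. I define
\[
 \varphi_a(a_t+j):=a_{t+1}+j,
\]
which translates block $t$ by $\Delta_t$ into the start of block $t+1$. Then I check three properties.
\begin{enumerate}[label=(\alph*),leftmargin=2em]
 \item \emph{Strict monotonicity.} Within a block the map has slope $1$. Across the boundary between blocks $t$ and $t+1$, the last point $a_{t+1}-1$ maps to $a_{t+1}+\Delta_t-1$, and the first point $a_{t+1}$ maps to $a_{t+2}$. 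We need $a_{t+1}+\Delta_t-1<a_{t+2}$, which is $\Delta_t\le\Delta_{t+1}$. This boundary inequality is the only place where hypothesis (ii) enters, and it is the crux of the construction.
 \item \emph{The bound $\varphi_a(z)\ge z+1$.} We have $\varphi_a(z)-z=\Delta_t\ge\Delta_0=a_1\ge1$.
 \item \emph{The orbit.} Taking $j=0$ gives $\varphi_a(a_t)=a_{t+1}$, so induction from $a_0=0$ yields $\varphi_a^{[t]}(0)=a_t$.
\end{enumerate}

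\emph{Depth formula.} Since $\varphi_a$ satisfies the hypotheses of Theorem~\ref{thm:inverse-depth}, that theorem gives
\[
 D_{\rho_{\varphi_a}}(y)=\min\{t:y\le\varphi_a^{[t]}(0)\}=\min\{t:y\le a_t\}.
\]
This minimum exists because $a_t\to\infty$. The final sentence of the theorem then needs no further argument: if $\rho_{\varphi_a}$ is effective, it is an effective descent by the remarks preceding Theorem~\ref{thm:inverse-depth}. Its depth profile $\delta_{\rho_{\varphi_a}}(u)$ is then determined by the second identity of that theorem.

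The main obstacle is modest. It is seeing that the natural blockwise-translation generator is forced in spirit, and that monotonicity at block boundaries is exactly equivalent to nondecreasing gaps. The rest is bookkeeping.
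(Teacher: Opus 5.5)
Your proof is correct and follows the paper's argument essentially step for step. You use the same unit-increment inequality for (i)$\Rightarrow$(ii), the same blockwise translation $\varphi_a(a_t+j)=a_{t+1}+j$ with the identical boundary check for (ii)$\Rightarrow$(i), and you read the depth formula off Theorem~\ref{thm:inverse-depth}, just as the paper does. One cosmetic point: in (b) you need not invoke (ii), since $\Delta_t\ge1$ already follows from $a_{t+1}>a_t$; this also keeps your remark that (ii) enters only at the block boundary literally true.
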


\begin{proof}
Suppose first that $a_t=\varphi^{[t]}(0)$ for a strictly increasing integer map $\varphi$.  Strict increase on $\N$ implies
\[
 \varphi(b)-\varphi(a)\ge b-a\qquad(b>a),
\]
because each unit increase of the input raises the integer output by at least one.  Hence
\[
 \Delta_{t+1}
 =\varphi(a_{t+1})-\varphi(a_t)
 \ge a_{t+1}-a_t
 =\Delta_t.
\]
Thus (i) implies (ii).

Conversely, assume that the gaps are nondecreasing.  For the unique $t$ with $a_t\le z<a_{t+1}$, put
\[
 \varphi_a(z):=a_{t+1}+(z-a_t).
\]
Within each interval this is strictly increasing and exceeds $z$ by $\Delta_t\ge1$.  At the boundary, the last value before $a_{t+1}$ is
\[
 a_{t+1}+\Delta_t-1
 \le a_{t+1}+\Delta_{t+1}-1
 =a_{t+2}-1,
\]
so the next value $\varphi_a(a_{t+1})=a_{t+2}$ is larger.  Thus $\varphi_a$ is strictly increasing, satisfies $\varphi_a(z)\ge z+1$, and $\varphi_a(a_t)=a_{t+1}$ gives the orbit identity by induction.  The depth formula is Theorem~\ref{thm:inverse-depth}.
\end{proof}

\begin{corollary}[Profiles beyond the canonical rows]\label{cor:intermediate}
For every fixed integer $k\ge1$ there are effective generalized-inverse descents $\rho_k^{\rm sh}$ and $\rho_k^{\rm mid}$ with width-$u$ depths
\[
 \delta_{\rho_k^{\rm sh}}(u)=\Theta\!\left(u^{1/k}\right),
 \qquad
 \delta_{\rho_k^{\rm mid}}(u)=\Theta\!\left(2^{u/k}\right).
\]
For $k>1$, pointwise in width, the first family is strictly shallower than the divisive $\Theta(u)$ row, while the second lies strictly between that row and the predecessor $\Theta(2^u)$ row.
\end{corollary}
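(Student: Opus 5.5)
We construct both families through Theorem~\ref{thm:orbit}: pick an increasing sequence $0=a_0<a_1<\cdots$ with nondecreasing gaps, take the explicit generator $\varphi_a$, and read off the width-$u$ depth from Theorem~\ref{thm:inverse-depth},
\[
 \delta_{\rho_{\varphi_a}}(u)=\min\{t:2^{u}-1\le a_t\}.
\]
For the shallow family we take $a_t:=2^{t^k}-1$ for $t\ge1$, with $a_0=0$. For the intermediate family we take $a_t:=2^{\lfloor kt\rfloor}\cdot$ (something linear); concretely $a_t:=t\cdot 2^{k\lfloor\log_2 (t+1)\rfloor}$ is awkward, so instead we use $a_t:=(t+1)^{k}-1$. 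Then $a_t\ge 2^u-1$ exactly when $t+1\ge 2^{u/k}$, which gives depth $\lceil 2^{u/k}\rceil-1=\Theta(2^{u/k})$. For the shallow family, $2^{t^k}-1\ge 2^u-1$ exactly when $t\ge u^{1/k}$, so the depth is $\lceil u^{1/k}\rceil=\Theta(u^{1/k})$.

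\textbf{Gap monotonicity.} For $(t+1)^k-1$ the gaps are $(t+2)^k-(t+1)^k$, which are nondecreasing by convexity of $x\mapsto x^k$ for $k\ge1$. For the shallow sequence the gaps are $\Delta_0=1$ and, for $t\ge1$, $\Delta_t=2^{(t+1)^k}-2^{t^k}$. These increase because the exponents increase by at least one at each step and $x\mapsto 2^x$ is convex. Theorem~\ref{thm:orbit} then supplies strictly increasing generators with $\varphi(z)\ge z+1$.

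\textbf{Effectiveness.} This is the main obstacle. We must show that $\rho_{\varphi_a}(y)$ is computable in $b^{O(1)}$ time and $O(b)$ space on $b$-bit inputs. From the definition of $\varphi_a$ on the block $[a_t,a_{t+1})$, we get $\rho_{\varphi_a}(y)=0$ for $y=0$. For $a_{t}<y\le a_{t+1}$ we get $\rho(y)=a_{t-1}+(y-a_t)$, provided $y-a_t\le\Delta_{t-1}$; otherwise, by the adjunction, the value is the least $z$ with $\varphi_a(z)\ge y$, which is computed similarly from the block structure. So it suffices to find the index $t$ with $a_t<y\le a_{t+1}$ and to evaluate $a_{t-1},a_t,a_{t+1}$.

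The relevant $t$ has at most $O(b)$ bits, and the values $a_t$ compared against $y$ need only about $b$ bits; candidates can be truncated once they exceed $2^b$. For $a_t=(t+1)^k-1$ we find $t$ by binary search on $t\le 2^{b/k+1}$, with polynomial-time multiplication in $O(b)$ space. For $a_t=2^{t^k}-1$ we have $t\le b^{1/k}+1$, so we enumerate $t$ and compare bit lengths. Both computations are polynomial time and linear space.

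\textbf{Strict comparisons.} For $k>1$, we have $u^{1/k}=o(u)$, so the first family is pointwise strictly below the divisive $\Theta(u)$ row. For the second family, $u=o(2^{u/k})$ and $2^{u/k}=o(2^u)$, so it lies strictly between the divisive row and the predecessor $\Theta(2^u)$ row. Here "strictly" means not within a constant factor, and this follows from these limits.
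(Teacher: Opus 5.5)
Your proposal is correct and follows the paper's proof. It uses the same shallow orbit $2^{t^k}-1$, an index-shifted version $(t+1)^k-1$ of the paper's intermediate orbit $t^k$, the threshold law for both depths, and effectiveness by locating the block via a fixed-degree root or bit length and then applying the explicit piecewise inverse. In your "otherwise" case the value is simply $a_t$, and your boundary $y-a_t\le\Delta_{t-1}$ (rather than $\le\Delta_{t-1}-1$) is harmless, since both branches give $a_t$ at that point.
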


\begin{proof}
Apply Theorem~\ref{thm:orbit} to $a_t=2^{t^k}-1$ and to $a_t=t^k$.  Both have nondecreasing gaps.  Theorem~\ref{thm:inverse-depth} at $K_u=2^u-1$ gives respectively the least $t$ with $u\le t^k$ and the least $t$ with $2^u-1\le t^k$, yielding the displayed orders.  For these two sequences the interval index is computable from a fixed-degree integer root (and, in the exponential case, binary length).  Moreover, if $a_{t+1}\le y\le a_{t+2}$, then
\[
 \rho_{\varphi_a}(y)=
 \begin{cases}
  a_t+(y-a_{t+1}),&y\le a_{t+1}+\Delta_t-1,\\
  a_{t+1},&y>a_{t+1}+\Delta_t-1,
 \end{cases}
\]
with $\rho_{\varphi_a}(0)=0$.  Thus the required arithmetic, comparisons, and fixed-degree roots run in polynomial time and $O(b)$ space on $b$-bit arguments. The descents are effective.
\end{proof}

The canonical descents are generalized inverses of
\[
 \varphi_{0,l}(z)=z+l,
 \qquad
 \varphi_{1,l}(z)=2^l z+(2^l-1).
\]
Indeed, their generalized inverses are respectively truncated subtraction by $l$ and division by $2^l$.  Fix a stride $l\ge1$ and define
\[
 \rho_{0,l}(y)=\max\{y-l,0\},
 \qquad
 \rho_{1,l}(y)=\left\lfloor\frac{y}{2^l}\right\rfloor.
\]

\begin{lemma}[Exact depth]\label{lem:depth}
For $y\in\N$,
\[
 D_{\rho_{0,l}}(y)=\left\lceil\frac yl\right\rceil,
 \qquad
 D_{\rho_{1,l}}(y)=\left\lceil\frac{\lambda(y)}l\right\rceil,
\]
where $\lambda(y):=\lceil\log_2(y+1)\rceil$.
\end{lemma}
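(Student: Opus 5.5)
The plan is to derive both formulas from the generalized-inverse threshold law, Theorem~\ref{thm:inverse-depth}, since the paper already notes that $\rho_{0,l}$ and $\rho_{1,l}$ are the generalized inverses of $\varphi_{0,l}(z)=z+l$ and $\varphi_{1,l}(z)=2^lz+(2^l-1)$. I would first confirm this claim. Both maps are strictly increasing and satisfy $\varphi(z)\ge z+1$ because $l\ge1$. For $\varphi_{0,l}$, the least $z$ with $z+l\ge y$ is $\max\{y-l,0\}$. For $\varphi_{1,l}$, the condition $2^lz+2^l-1\ge y$ is equivalent to $z\ge (y+1)/2^l-1$. The least such integer is $\lceil (y+1)/2^l\rceil-1$, and this equals $\lfloor y/2^l\rfloor$ by the standard identity $\lceil (y+1)/m\rceil-1=\lfloor y/m\rfloor$ for positive integers $m$.

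Next I would compute the orbits $\varphi^{[t]}(0)$. For the subtractive descent the orbit is clearly $tl$. For the divisive descent I would prove by induction that $\varphi_{1,l}^{[t]}(0)=2^{tl}-1$. The inductive step is
\[
2^l(2^{tl}-1)+2^l-1=2^{(t+1)l}-1 .
\]

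Theorem~\ref{thm:inverse-depth} then gives $D_{\rho_{0,l}}(y)=\min\{t: y\le tl\}=\lceil y/l\rceil$. It also gives $D_{\rho_{1,l}}(y)=\min\{t: y+1\le 2^{tl}\}$. The condition $y+1\le 2^{tl}$ is equivalent to $\log_2(y+1)\le tl$, and since $tl$ is an integer this is equivalent to $\lambda(y)\le tl$. The least $t$ satisfying it is therefore $\lceil\lambda(y)/l\rceil$. The case $y=0$ is consistent with both formulas, since both sides are $0$.

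The only delicate point is the integer bookkeeping in the divisive case. That means the floor identity for the generalized inverse, and passing from $2^{tl}\ge y+1$ to $\lambda(y)\le tl$, which relies on $tl$ being an integer. Neither step should cause trouble.

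As a fallback that avoids Theorem~\ref{thm:inverse-depth}, one can iterate directly. We have $\rho_{0,l}^{[t]}(y)=\max\{y-tl,0\}$ and $\rho_{1,l}^{[t]}(y)=\lfloor y/2^{tl}\rfloor$, using nested floors. The second iterate vanishes exactly when $y<2^{tl}$, which gives the same formulas.
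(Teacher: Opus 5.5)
Your proposal is correct and follows the paper's own proof: compute the generator orbits $\varphi_{0,l}^{[t]}(0)=lt$ and $\varphi_{1,l}^{[t]}(0)=2^{lt}-1$, then apply the threshold law of Theorem~\ref{thm:inverse-depth}. You also spell out details the paper leaves implicit, namely the generalized-inverse identification (via $\lceil (y+1)/m\rceil-1=\lfloor y/m\rfloor$) and the integrality step $y+1\le 2^{tl}\Leftrightarrow\lambda(y)\le tl$.
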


\begin{proof}
The generator iterates are
\[
 \varphi_{0,l}^{[t]}(0)=lt,
 \qquad
 \varphi_{1,l}^{[t]}(0)=2^{lt}-1.
\]
Apply Theorem~\ref{thm:inverse-depth}.  The conditions $y\le lt$ and $y\le2^{lt}-1$ give the two displayed formulas.
\end{proof}

\subsection{Finite width syntax}

Width terms are generated by the finite grammar
\[
 u::=\mathbf L\mid\mathbf N\mid(u+u)\mid(u\cdot u)\mid2^u,
\]
with $\mathbf L(N)=\lceil\log_2N\rceil+1$ and $\mathbf N(N)=N$.  A term is a capacity annotation, not an oracle for an arbitrary resource function.  Its clock is
\[
 K_u(N):=2^{u(N)}-1.
\]
For an effective descent $\rho$ define its depth profile
\[
 \delta_\rho(u;N):=D_\rho(K_u(N)).
\]

\begin{lemma}[Depth--width identity]\label{lem:duality}
For every width term $u$,
\[
 \delta_{\rho_{1,l}}(u;N)=\left\lceil\frac{u(N)}l\right\rceil,
 \qquad
 \delta_{\rho_{0,l}}(u;N)=\left\lceil\frac{2^{u(N)}-1}{l}\right\rceil.
\]
Hence the two profiles are respectively $\Theta(u(N))$ and $\Theta(2^{u(N)})$.
\end{lemma}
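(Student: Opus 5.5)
The plan is to substitute the clock value $K_u(N)=2^{u(N)}-1$ into the exact depth formulas of Lemma~\ref{lem:depth}, since by definition $\delta_\rho(u;N)=D_\rho(K_u(N))$. Nothing else is needed beyond evaluating $\lambda$ at this clock value and then reading off the asymptotic orders.

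For the divisive descent $\rho_{1,l}$, Lemma~\ref{lem:depth} gives $D_{\rho_{1,l}}(y)=\lceil\lambda(y)/l\rceil$. We evaluate $\lambda$ at $y=K_u(N)$:
\[
 \lambda(2^{u(N)}-1)=\lceil\log_2 2^{u(N)}\rceil=u(N),
\]
and $u(N)$ is a natural number, so no rounding occurs. This yields $\delta_{\rho_{1,l}}(u;N)=\lceil u(N)/l\rceil$. For the subtractive descent $\rho_{0,l}$, Lemma~\ref{lem:depth} gives $D_{\rho_{0,l}}(y)=\lceil y/l\rceil$. Substituting $y=2^{u(N)}-1$ gives the second identity immediately.

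The asymptotic claims follow because $l\ge1$ is a fixed stride. For any $m\ge1$ we have
\[
 m/l\le\lceil m/l\rceil\le m,
\]
so $\lceil u(N)/l\rceil=\Theta(u(N))$. For the second profile, $2^{u(N)}-1\ge 2^{u(N)-1}$ whenever $u(N)\ge1$, which gives $\lceil(2^{u(N)}-1)/l\rceil=\Theta(2^{u(N)})$.

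The one point that needs checking is that $u(N)\ge1$ for every width term. This follows by induction on the grammar. The base terms satisfy $\mathbf L(N)\ge1$ and $\mathbf N(N)=N\ge2$, and the operations $+$, $\cdot$ and $2^{(\cdot)}$ preserve positivity. Apart from this, the argument is direct substitution, and I do not expect a genuine obstacle.
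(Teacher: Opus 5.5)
Your proof is correct and follows the paper's route: the paper's proof simply notes that the clock $K_u(N)=2^{u(N)}-1$ has binary length exactly $u(N)$ and applies Lemma~\ref{lem:depth}. Your explicit $\Theta$ bounds and the check that $u(N)\ge1$ fill in details the paper leaves implicit; the paper already records the stronger fact $u(N)\ge\mathbf L(N)$.
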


\begin{proof}
The clock $K_u$ has binary length exactly $u(N)$. Apply Lemma~\ref{lem:depth}.
\end{proof}

We use the following width families.  $\mathcal W_{\log}$ consists of terms generated from $\mathbf L$ by addition.  $\mathcal W_0$ consists of terms generated from $\mathbf N$ by addition and multiplication.  For $k\ge1$, $\mathcal W_k$ consists of terms generated from $\mathbf N$ by addition, multiplication, and at most $k$ nested occurrences of $2^{(\cdot)}$ on every syntax branch.  Finally $\mathcal W_E$ consists of sums and products of terms $2^v$ where $v$ is obtained from $\mathbf N$ by addition only.

Every width term satisfies $u(N)\ge\mathbf L(N)$.  Call a width family $\mathcal W$ \emph{scaling-closed} if for every $u\in\mathcal W$ and integer $c\ge1$ some $v\in\mathcal W$ satisfies $v(N)\ge c\,u(N)$ for all $N$. Call it \emph{power-cofinal} if for every $u\in\mathcal W$ and fixed integer $d\ge1$ some $v\in\mathcal W$ satisfies $u(N)^d=O(v(N))$.  Hence scaling closure already absorbs the $O(\log N)$ input-pointer overhead.

\begin{lemma}[Width scales]\label{lem:widths}
The terms of $\mathcal W_{\log}$ are $O(\log N)$ and dominate every fixed multiple of $\log N$. The family $\mathcal W_0$ is polynomially bounded and contains $N^d$ for every fixed $d$. Every $\mathcal W_E$ term is $2^{O(N)}$ and the family is cofinal for that scale. Finally, $\mathcal W_k$ is cofinal for $\exp_k(N^{O(1)})$.  Every displayed family is scaling-closed, while $\mathcal W_0,\mathcal W_E,$ and $\mathcal W_k$ are power-cofinal.  Every fixed $u$ and its clock $K_u$ can be initialized in $u(N)^{O(1)}$ time and $O(u(N))$ space.
\end{lemma}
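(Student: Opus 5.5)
The plan is a structural induction on width terms. For each family we carry an upper-bound invariant, and we show cofinality by exhibiting explicit terms. Each constructor is interpreted pointwise: $(u+v)(N)=u(N)+v(N)$, $(u\cdot v)(N)=u(N)v(N)$ and $2^u(N)=2^{u(N)}$.

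\emph{Upper bounds.} For $\mathcal W_{\log}$, a sum of $m$ copies of $\mathbf L$ equals $m\mathbf L(N)=O(\log N)$. Taking $m\ge c$ gives $m\mathbf L\ge c\log_2 N$, so the family dominates every fixed multiple. For $\mathcal W_0$, induction shows each term is a polynomial in $N$ with nonnegative integer coefficients, hence $O(N^d)$ for some $d$; the product of $d$ copies of $\mathbf N$ realizes $N^d$. For $\mathcal W_E$, an exponent $v$ built from $\mathbf N$ by addition is $mN$, so each $2^v$ is $2^{mN}$. Finite sums and products of such terms stay $2^{O(N)}$, since a product of $r$ factors $2^{m_iN}$ is $2^{(\sum m_i)N}$ and sums at most double this per operation. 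The terms $2^{m\mathbf N}$ give cofinality. For $\mathcal W_k$, we prove by induction on the syntax that a term with at most $j$ nested exponentials on every branch is bounded by $\exp_j(N^{d})$ for some $d$. The sum and product steps use $\exp_j(a)+\exp_j(b)\le\exp_j(a+b)$ and $\exp_j(a)\exp_j(b)\le \exp_j(a+b)$ for $j\ge1$, together with $\exp_j(N^{d})\le\exp_j(N^{d+1})$ for absorbing constants. The exponential step uses $2^{\exp_{j}(N^d)}=\exp_{j+1}(N^d)$. Cofinality follows because $2^{2^{\cdots N^d}}$, written as a term with $k$ exponentials around a product of $d$ copies of $\mathbf N$, lies in $\mathcal W_k$.

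\emph{Closure properties.} Scaling closure is immediate in every family: the $c$-fold sum $u+\cdots+u$ lies in the same family, since each family is closed under addition, and it equals $c\,u(N)$. Power-cofinality for $\mathcal W_0$, $\mathcal W_E$ and $\mathcal W_k$ comes from the $d$-fold product $u\cdot\ldots\cdot u$, which lies in the family because each is closed under multiplication. For $\mathcal W_E$ this holds because products of terms $2^v$ are allowed, and for $\mathcal W_k$ because multiplication adds no nesting. The family $\mathcal W_{\log}$ is correctly excluded from power-cofinality.

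\emph{Initialization.} Evaluate $u$ bottom-up in binary. Computing $\mathbf N$ and $\mathbf L$ requires scanning the input, which takes $O(\log N)$ space and polynomial time. Each addition or multiplication of numbers at most $u(N)$ costs polynomial time in their bit length. An exponential $2^{w}$ is written as a $1$ followed by $w$ zeros, which is $O(u(N))$ cells and $O(u(N))$ time, as long as $2^w\le u(N)$ at that subterm. Monotonicity ensures every intermediate value is at most $u(N)$, since all constructors are increasing and each subterm is $\ge1$. The binary value of each subterm therefore has at most $\log u(N)+1$ bits, except the exponential outputs, whose unary-like binary form costs $O(u(N))$ bits. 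Finally, $K_u$ is the string $1^{u(N)}$, written in $O(u(N))$ time and space.

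The main obstacle is the space bound when several large exponentials appear inside a product. Storing the binary values of the subterms side by side could exceed $O(u(N))$. I would handle this by evaluating depth-first and keeping at most one value per level of the fixed syntax tree, so only $O(1)$ numbers are stored, each of size at most $u(N)$. This gives total space $O(u(N))$, where the constant depends on the fixed term.
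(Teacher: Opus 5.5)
Your growth and closure arguments are correct and follow the paper's own proof, only in more detail. The shared ingredients are: induction on the fixed syntax tree, with constants absorbed into the bottom polynomial at positive tower height; explicit witness terms for cofinality (sums of copies of $\mathbf L$, products of $\mathbf N$, $2^{m\mathbf N}$, and $k$ exponentials over $\mathbf N^d$); $c$-fold sums for scaling closure; $d$-fold products for power-cofinality; and bottom-up binary evaluation followed by writing $u(N)$ one-bits for the clock.

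One step fails as written, though it is easily repaired. You say that computing $\mathbf N$ and $\mathbf L$ requires scanning the input. A scan costs $\Omega(N)$ time, which lies within $u(N)^{O(1)}$ only when $u(N)\ge N^{\Omega(1)}$. For $u\in\mathcal W_{\log}$ the lemma claims initialization in $(\log N)^{O(1)}$ time, so your argument does not cover that family. Theorem~\ref{thm:profile}(i) uses this bound for every width, for example the divisive descent on logarithmic width, whose entire path budget is $O(\log N)$.

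The fix is not to scan. The local basis supplies $|x|$ in binary, and on the paper's indexed-input machines $|x|$ can be found by doubling and then binary search on the index. This uses $O(\log N)$ probes, hence $(\log N)^{O(1)}$ time and $O(\log N)$ space. Given $N$ in binary, $\mathbf L(N)$ is read off from the bit length of $N-1$.

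A smaller remark: the ``main obstacle'' you raise does not arise. Every subterm value is at most $u(N)$, so it has at most $\log_2u(N)+1$ bits. This includes $2^w$: the condition $2^w\le u(N)$ forces $w\le\log_2u(N)$, so exponential outputs are not an exception. All intermediate values together therefore occupy $O(\log u(N))$ cells, and the only object of size $\Theta(u(N))$ is $K_u=1^{u(N)}$ itself.
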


\begin{proof}
Induct on the fixed syntax tree.  Addition and multiplication preserve polynomial scale. At a fixed positive tower height, fixed products and powers are absorbed by changing a constant or the bottom polynomial.  Conversely the grammar contains terms dominating $N^d$, $2^{cN}$, and $\exp_k(N^d)$ for all fixed $c,d$.  The stability claims follow from the same observations. The canonical representatives $N^d$, $2^{cN}$, and $\exp_k(N^d)$ used below are constructible at their displayed scales. To initialize $K_u$, evaluate the fixed syntax tree with binary counters and then write $u(N)$ one-bits.
\end{proof}

\subsection{Bounded-state step-recursion dynamics}

A \emph{state of width $u$} is a fixed tuple of binary stacks, finite control fields, and input pointers.  Each stack contains at most $u(N)$ data bits. Each pointer ranges over $\{0,\ldots,|x|\}$ and therefore uses at most $\mathbf L(N)$ bits.  The input $x$ is read-only and is not part of the mutable state.  Since the tuple arity is fixed and $u\ge\mathbf L$, every state has $O(u(N))$ mutable bits.  The finite local basis contains Boolean operations and conditionals, push-$0$, push-$1$, empty/top/pop on bounded stacks, pointer increment/decrement and zero test, the input length $|x|$ available in binary, and $\operatorname{Read}(x,i)$.  A push is allowed exactly when the current stack has fewer than $u(N)$ data bits. Otherwise the update enters a distinguished rejecting overflow state.  Empty-stack pop preserves the empty stack and its top symbol is the blank bit $0$.  Pointer decrement at $0$ stays at $0$, while an increment beyond $|x|$ enters the same overflow state.  Thus every primitive is total and width-preserving.  A \emph{local map} is a fixed composition of these primitives, with read-only access to $x$, and is computable in $u(N)^{O(1)}$ time and $O(u(N))$ space.

For a finite nonempty tuple of bits $\bar b$, define
\[
 \Gamma_{\D}(b_0)=b_0,
 \qquad
 \Gamma_{\Ex}(\bar b)=\bigvee_i b_i,
 \qquad
 \Gamma_{\All}(\bar b)=\bigwedge_i b_i,
\]
and let $\Gamma_{\Alt,\theta}$ be $\vee$ or $\wedge$ according as $\theta\in\{\exists,\forall\}$.

\begin{definition}[Bounded-state step-recursion dynamics and Boolean semantics]\label{def:bsr}
Fix an effective descent $\rho$, width $u$, and local maps in the preceding sense: initialization $I(x)$, successors $T_0(x,c),\ldots,T_{q-1}(x,c)$, guards $\eta_i(x,c)$, halting and accepting tests $h(x,c),a(x,c)$, and a type label $\theta(x,c)\in\{\exists,\forall\}$.  Write $\mathfrak D=(\rho,u,I,\bar T,\bar\eta,h,a,\theta)$ for this \emph{bounded-state step-recursion dynamics}.  At least one guard is true at every nonhalting state. The dynamics $\mathfrak D$ is \emph{deterministic} when exactly one is true.

For $Q\in\{\D,\Ex,\All,\Alt\}$, with $Q=\D$ only for deterministic $\mathfrak D$, define
\[
 B^Q_{\mathfrak D}(x,c,y)=
 \begin{cases}
  a(x,c),&h(x,c)=1,\\
  0,&h(x,c)=0\text{ and }y=0,\\
  \Gamma_Q\bigl(B^Q_{\mathfrak D}(x,T_i(x,c),\rho(y)):\eta_i(x,c)=1\bigr),&\text{otherwise},
 \end{cases}
\]
where $\Gamma_Q=\Gamma_{\Alt,\theta(x,c)}$ in alternating mode, and put $\BSR^Q(\mathfrak D)(x):=B^Q_{\mathfrak D}(x,I(x),K_u(N))$.  Thus $\mathfrak D$ is fixed before its Boolean semantics $Q$ is chosen.
\end{definition}

Lemma~\ref{lem:realize} realizes the deterministic semantics by one numerical bounded step recursion.  The other Boolean modes aggregate the same dynamics. Counting is introduced later as a separate extension and is not part of the Boolean equivalence theorem.

To connect the bounded-state notation to numerical step recursion, encode $x$ by
\[
 \nu(x):=2^{|x|}+\operatorname{val}(x),\qquad \lambda(\nu(x))=|x|+1.
\]
Fix the following finite non-projection numerical basis. In addition, all standard argument projections are available:
\[
 \mathcal C_{\rm bit}=\{0,1,+,\times,E,\operatorname{pred},\lambda,
 \operatorname{Cond},\operatorname{leq},\operatorname{push}_0,\operatorname{push}_1,
 \operatorname{pop},\operatorname{last},\pi,\pi_0,\pi_1,\operatorname{Read}\}.
\]
Here $E(z)=2^z$, $\operatorname{pred}(z)=\max\{z-1,0\}$,
$\operatorname{Cond}(b,r,s)=r$ if $b=1$ and $s$ otherwise, and
$\operatorname{leq}(r,s)$ is $1$ exactly when $r\le s$.  Also
$\operatorname{push}_b(z)=2z+b$, $\operatorname{pop}(z)=\lfloor z/2\rfloor$, and
$\operatorname{last}(z)=z\bmod2$.  The fixed pairing $\pi$ has projections
$\pi_0,\pi_1$ and satisfies
$\lambda(\pi(r,s))=O(\lambda(r)+\lambda(s)+1)$.
For $z>0$, write its canonical binary representation as $1x$ and define
$\operatorname{Read}(z,i)$ to be the $i$th bit of the payload $x$ under a fixed zero-based convention, or $0$ when $i$ is out of range. Put $\operatorname{Read}(0,i)=0$.
Thus $\operatorname{Read}(\nu(x),i)$ is exactly read-only access to the input string $x$.  Put
\[
 N_z:=\lambda(z)+1,\qquad L_z:=\lambda(\operatorname{pred}(N_z))+1,
\]
and define $U_u(z)$ from the syntax of $u$ using the leaves $L_z,N_z$ and the constructors $+,\times,E$.  Then $U_u(\nu(x))=u(|x|+2)$.

Encode a stack $b_1\cdots b_m$ by the integer with binary representation $1b_1\cdots b_m$. Hence the empty stack is $1$ and its data length is $\lambda(s)-1$.  Define the total semantic stack operations
\[
 \operatorname{Top}(s)=
 \begin{cases}0,&s\le1,\\ \operatorname{last}(s),&s>1,\end{cases}
 \qquad
 \operatorname{Pop}(s)=
 \begin{cases}1,&s\le1,\\ \operatorname{pop}(s),&s>1,\end{cases}
\]
and let $\operatorname{Empty}(s)=1$ exactly when $s=1$.  A push is legal exactly when
$\lambda(s)\le U_u(z)$, equivalently when the stack currently has fewer than $u(N)$ data bits on a valid input code.  Fixed tuples are encoded by iterated $\pi$.  Hence every local map and width test used above has a numerical representative in the composition closure of $\mathcal C_{\rm bit}$ and the standard argument projections.  This basis is used only for literal algebraic realization, not for a complexity upper bound.

For an effective descent $\rho$, define $\mathcal A_\rho(\mathcal C_{\rm bit})$ by finite stages.  Let
\[
 A_0=\operatorname{Cl}_{\circ}(\mathcal C_{\rm bit}),
\]
where $\operatorname{Cl}_{\circ}$ denotes closure under composition together with the standard argument-projection schema.  By the preceding explicit constructions, every $U_u$, $\operatorname{Top}$, $\operatorname{Pop}$, and $\operatorname{Empty}$ belongs to $A_0$.
Given $A_r$, let $A_{r+1}$ be the composition closure of $A_r$ together with every function $f$ obtained from $g,h,b\in A_r$ by
\[
 f(\bar z,0)=g(\bar z),\qquad
 f(\bar z,y)=h\bigl(\bar z,\rho(y),f(\bar z,\rho(y))\bigr)\quad(y>0),
\]
subject to the pointwise bound $f(\bar z,y)\le b(\bar z,y)$.  Put
\[
 \mathcal A_\rho(\mathcal C_{\rm bit})=\bigcup_{r<\omega}A_r.
\]
This is ordinary bounded step recursion with the required ``earlier data'' condition made explicit.

\begin{lemma}[Numerical realization]\label{lem:realize}
Let $\mathfrak D$ be a deterministic width-$u$ dynamics from Definition~\ref{def:bsr}.  There is a Boolean function $F_{\mathfrak D}\in\mathcal A_\rho(\mathcal C_{\rm bit})$ such that
\[
 F_{\mathfrak D}(\nu(x))=1\quad\Longleftrightarrow\quad \BSR^\D(\mathfrak D)(x)=1.
\]
The construction uses exactly one bounded step recursion. Its initialization, transition, bound, and final tests lie in $A_0$.
\end{lemma}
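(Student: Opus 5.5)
The plan is to realize the deterministic semantics $B^\D_{\mathfrak D}$ by a single recursion on the clock argument $y$, carrying the current state as the recursion value. The recursion unfolds as $f(y)=h(\rho(y),f(\rho(y)))$, so the value at $y$ is obtained from the value at $\rho(y)$. This is the opposite orientation from $B^\D$, which moves from $y$ toward $\rho(y)$. I would therefore not recurse on the Boolean value. Instead I would let $f(z,y)$ be the encoded state reached after $D_\rho(y)$ steps of a total ``frozen'' step map starting from $I(x)$.

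Concretely, write $z=\nu(x)$ and let $c_0=I(x)$, coded by iterated $\pi$. Define the one-step map $S(z,c)$ by a finite $\operatorname{Cond}$ cascade:
\begin{itemize}[leftmargin=2em]
 \item if $h(x,c)=1$, or $c$ is the rejecting overflow state, return $c$;
 \item otherwise return $T_i(x,c)$ for the unique $i$ with $\eta_i(x,c)=1$.
\end{itemize}
The successor is unique by determinism. Every primitive (push legality via $\lambda(s)\le U_u(z)$, $\operatorname{Top}$, $\operatorname{Pop}$, $\operatorname{Read}$, pointer operations) lies in $A_0$ by the explicit constructions above, so $S\in A_0$. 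Set
\[
 f(z,0)=c_0,\qquad f(z,y)=S\bigl(z,f(z,\rho(y))\bigr)\quad(y>0),
\]
so that $h(z,w,v):=S(z,v)$ ignores $w$. By induction on $D_\rho(y)$, $f(z,y)=S^{[D_\rho(y)]}(c_0)$.

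Unfolding Definition~\ref{def:bsr} with $y=K_u(N)$ gives a unique path $c_0,c_1,\dots$ with clocks $y,\rho(y),\dots$. The budget is exactly $D_\rho(K_u(N))=\delta_\rho(u;N)$ applications. $B^\D$ returns $a(x,c_j)$ for the first halting $c_j$ with $j\le D_\rho(y)$, and returns $0$ if none halts. Halting states are fixed points of $S$, so $c^\ast:=f(z,K_u)$ satisfies
\[
 \BSR^\D(\mathfrak D)(x)=h(x,c^\ast)\wedge a(x,c^\ast).
\]
This requires that the halting test is evaluated before the clock test at $y=0$, which matches the case order in Definition~\ref{def:bsr}. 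Take $F_{\mathfrak D}(z):=\operatorname{Cond}\bigl(h(z,c^\ast),a(z,c^\ast),0\bigr)$ with $c^\ast=f(z,U'(z))$, where $U'(z)=\operatorname{pred}(E(U_u(z)))=K_u$ is in $A_0$. The overflow state is treated as halting and rejecting.

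The main obstacle is the bound $b\in A_0$ with $f\le b$. Every reachable state has a fixed number of stacks of at most $u(N)$ data bits, pointers of at most $\mathbf L(N)$ bits, and finite control fields. The pairing satisfies $\lambda(\pi(r,s))=O(\lambda(r)+\lambda(s)+1)$, applied a fixed number of times. Hence $\lambda(f(z,y))\le C\cdot U_u(z)+C$ for a constant $C$ depending only on the tuple shape. I would take
\[
 b(z,y):=E\bigl(C\times U_u(z)+C\bigr),
\]
which is in $A_0$ by composition with numerals built from $1$ and $+$.

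One point needs checking: the bound must hold for every integer $z$, not only for $z=\nu(x)$. The primitives keep stack lengths within $U_u(z)$ for arbitrary $z$, since legality is tested against $U_u(z)$ itself and illegal pushes go to overflow. So the same estimate holds uniformly. Finally, $F$ is obtained by composing the single recursion instance with $A_0$ functions, so $F\in A_1\subseteq\mathcal A_\rho(\mathcal C_{\rm bit})$, and $F$ is Boolean.
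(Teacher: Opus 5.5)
Your proposal is correct and follows essentially the paper's proof: a single bounded step recursion $R(z,0)=G(z)$, $R(z,y)=S\bigl(z,R(z,\rho(y))\bigr)$ carries the encoded state forward with halting states made absorbing, is evaluated at the clock $\operatorname{pred}(E(U_u(z)))$, and is followed by the final test $H\cdot A$. The only difference is in how the pointwise bound is secured: the paper clips the initialization and update to the cap $\operatorname{pred}(E(M_{\mathfrak D}(z)))$, sending malformed or overflowing codes to a rejecting state, whereas you derive the bound from the invariant that every state reachable from $I(x)$ passes the width tests for every $z$, which is equally valid here.
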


\begin{proof}
Encode the fixed tuple of stacks, control fields, and pointers by iterated $\pi$.  For some constant $C_{\mathfrak D}$, every legal state code has bit length at most
\[
 M_{\mathfrak D}(z):=C_{\mathfrak D}\bigl(U_u(z)+L_z+1\bigr).
\]
By the preceding encoding, the total initialization $G^0_{\mathfrak D}(z)$, deterministic update $S^0_{\mathfrak D}(z,r)$, halting test $H_{\mathfrak D}$, acceptance test $A_{\mathfrak D}$, and validity checks all lie in $A_0$. Malformed or overflowing states are sent to a fixed rejecting state, and halting states are made absorbing.  Let
\[
 Q_{\mathfrak D}(z,y):=\operatorname{pred}(E(M_{\mathfrak D}(z))).
\]
Clipping $G^0_{\mathfrak D}$ and $S^0_{\mathfrak D}$ to this cap gives $G_{\mathfrak D},S_{\mathfrak D}\in A_0$ that agree with every legal computation and are everywhere bounded by $Q_{\mathfrak D}$.  Hence
\[
 R_{\mathfrak D}(z,0)=G_{\mathfrak D}(z),\qquad
 R_{\mathfrak D}(z,y)=S_{\mathfrak D}\bigl(z,R_{\mathfrak D}(z,\rho(y))\bigr)\quad(y>0)
\]
is one admissible bounded step recursion, so $R_{\mathfrak D}\in A_1$.  Induction on $D_\rho(y)$ shows that it is the encoded state after that many transition opportunities.  Since the clock $\operatorname{pred}(E(U_u(z)))$ belongs to $A_0$,
\[
 F_{\mathfrak D}(z)=H_{\mathfrak D}(z,R_{\mathfrak D}(z,K))A_{\mathfrak D}(z,R_{\mathfrak D}(z,K)),
 \qquad K:=\operatorname{pred}(E(U_u(z))),
\]
belongs to $\mathcal A_\rho(\mathcal C_{\rm bit})$ and equals the deterministic BSR value on every valid input code.
\end{proof}

For a width family $\mathcal W$, $\SR^Q_{\rho}[\mathcal W]$ is the class of languages accepted by the $Q$-semantics of dynamics from Definition~\ref{def:bsr} with $u\in\mathcal W$.  We abbreviate $\SR^Q_{j,l}[\mathcal W]:=\SR^Q_{\rho_{j,l}}[\mathcal W]$.  Completed branching predicates are not allowed as new local guards. Otherwise the definition would introduce oracle access.

\begin{lemma}[Configuration coding]\label{lem:config}
For every fixed deterministic, existential, universal, or alternating multitape Turing machine $M$, its one-step successor relation is represented by fixed local maps and guards.  If a configuration of $M$ uses $s(N)$ work cells, it fits in width $u$ whenever
\[
 u(N)\ge c_M(s(N)+\mathbf L(N))
\]
for a constant $c_M$ depending only on $M$.
\end{lemma}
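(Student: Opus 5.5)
We encode each configuration of $M$ as a fixed-arity state of the kind allowed in Definition~\ref{def:bsr}, and then write the one-step transition as a finite case distinction over the finite data that $M$ can observe.

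\emph{State layout.} For a $k$-tape machine, the state of width $u$ consists of the following fields.
\begin{itemize}
\item A finite control field holding the current state of $M$.
\item One read-only input pointer, ranging over $\{0,\ldots,|x|\}$.
\item For each work tape, two binary stacks: a left stack holding the cells strictly left of the head, top equal to the nearest cell, and a right stack holding the scanned cell and everything to its right, top equal to the scanned cell.
\end{itemize}
A non-binary tape alphabet $\Sigma$ is handled by a fixed block code of $\lceil\log_2(|\Sigma|+1)\rceil$ bits per cell. One code is reserved for the blank, so an empty stack reads as blank.

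A head move is then a fixed sequence of pops and pushes transferring one block between the two stacks. A write replaces the top block of the right stack. The scanned symbols are recovered by a fixed number of top/pop operations on copies, or equivalently by reading and restoring; either is a fixed composition of primitives. The input symbol is $\operatorname{Read}(x,i)$ together with a comparison of $i$ against $|x|$ for endmarkers. Input head moves are pointer increment and decrement. The input pointer is kept inside $\{0,\ldots,|x|\}$ because $M$ never moves past its endmarker, so the overflow convention is never triggered on legal runs.

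\emph{Transition.} The transition relation $\delta$ of $M$ is a finite table indexed by (control state, scanned input symbol, scanned work symbols). For each of the at most $q$ transitions available, let $i$ range over a fixed enumeration of them.
\begin{itemize}
\item The guard $\eta_i$ is the Boolean test that the current observed tuple matches the left-hand side of the $i$th transition.
\item The successor $T_i$ performs its writes, moves and state change.
\item $h$ tests for a halting control state, and $a$ tests for an accepting one.
\item $\theta$ reads the $\exists/\forall$ label from the control state.
\end{itemize}
We need the condition that at least one guard holds at every nonhalting state. For this we use that $M$ is total with a decider convention; if needed, we add a default transition into a rejecting halting state for tuples with no entry. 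Determinism of $M$ gives exactly one true guard. Each of these maps is a fixed composition of the listed primitives, hence a local map.

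\emph{Width bound.} Each work stack contains at most $\lceil\log_2(|\Sigma|+1)\rceil\,(s(N)+1)$ data bits. This is at most $c_M(s(N)+\mathbf L(N))$ for a suitable constant $c_M$, since $\mathbf L\ge1$ absorbs the $+1$. So under $u\ge c_M(s+\mathbf L)$ no legal push reaches overflow. The pointer needs at most $\mathbf L(N)$ bits by definition.

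\emph{Main obstacle.} The main obstacle is bookkeeping at the tape boundaries: keeping blank semantics consistent when the head moves onto unused cells, so that the stack encoding faithfully matches the configuration. The convention that an empty-stack pop preserves emptiness and reads top $0$ supplies blanks, provided the blank's block code is all zeros. We will fix the code that way and check by a short case analysis on the move direction that one step of $M$ commutes with the encoding.
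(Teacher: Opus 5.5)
\paragraph{Assessment.} Your construction follows the paper's proof: two stacks per work tape, the input head as an $\mathbf L(N)$-bit pointer read through $\operatorname{Read}$, and one guarded local map per transition-table entry, with the width bound coming from the $O(s(N))$ stack contents. Keeping the scanned work symbol on top of the right stack, instead of in finite control as the paper does, is a harmless variant.

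\paragraph{The gap: input-boundary detection.} You detect the endmarker by ``a comparison of $i$ against $|x|$''. No such operation is in the local basis.
\begin{itemize}
\item Pointers admit only increment, decrement and zero test.
\item A local map is a \emph{fixed} composition of primitives. A bounded number of increments, decrements and zero tests can only recognise pointer values in a bounded range, so it cannot test $i=|x|$ for an $\mathbf L(N)$-bit pointer.
\item $\operatorname{Read}$ does not help either: it returns $0$ out of range, so the right end of the input looks exactly like a $0$ bit.
\end{itemize}
Without this test your guards cannot present $M$'s endmarker symbol, so $\eta_i$ does not faithfully match the left-hand sides of $\delta$. Your claim that overflow is never triggered because ``$M$ never moves past its endmarker'' presupposes that $M$'s view of the endmarker is simulated, which is exactly what is missing.

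\paragraph{The repair.} The paper stores the input head as the pair of pointers $(i,|x|-i)$, with the second component initialised from the binary $|x|$ supplied by the basis. Zero tests on the two components detect the two input boundaries. A head move increments one component and decrements the other. This costs one more $\mathbf L(N)$-bit pointer, so your width condition $u\ge c_M(s+\mathbf L)$ is unchanged. With this substitution the rest of your argument goes through.
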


\begin{proof}
Represent each work tape by the two stacks on either side of its head, the scanned symbol, and finite control.  For an input head at position $i$, store the pair $(i,|x|-i)$. Zero tests on the two components detect the two input boundaries, and a head move increments one component while decrementing the other.  These are $O(\log N)$ pointers, and $\operatorname{Read}(x,i)$ supplies the scanned input bit.  One Turing transition therefore changes only finite control, one scanned symbol, finitely many stack ends, and input pointers, hence is a fixed local map.
\end{proof}

\begin{remark}[Machine-model robustness]\label{rem:robust}
The results extend to fixed finite-branching models with $O(s+\mathbf L(N))$-bit configurations and fixed local one-step maps, including standard two-stack and bounded-register/RAM presentations \cite{vanEmdeBoas1990}; no fine-grained unit-cost claim is made for counter/Minsky machines.  Reversible local successors give a reversible submodel, with deterministic space preserved asymptotically \cite{LangeMcKenzieTapp2000}.
\end{remark}

\section{Resource factorization: algebra and machines}

For a width family $\mathcal W$, write
\[
 \DTIME(\mathcal W):=\bigcup_{u\in\mathcal W}\DTIME(u(N)),
 \qquad
 \DSPACE(\mathcal W):=\bigcup_{u\in\mathcal W}\DSPACE(u(N)),
\]
and analogously for the other modes.  To shorten statements, set
\[
 \mathsf{TIME}_{\D}=\DTIME,\quad
 \mathsf{TIME}_{\Ex}=\NTIME,\quad
 \mathsf{TIME}_{\All}=\co\NTIME,\quad
 \mathsf{TIME}_{\Alt}=\ATIME,
\]
\[
\begin{aligned}
 \mathsf{SPACE}_{\D}&=\DSPACE, &
 \mathsf{SPACE}_{\Ex}&=\NSPACE,\\
 \mathsf{SPACE}_{\All}&=\co\NSPACE, &
 \mathsf{SPACE}_{\Alt}&=\ASPACE.
\end{aligned}
\]

The central object can therefore be summarized as
\[
\begin{array}{c}
 \boxed{\text{bounded-state step-recursion dynamics}}\\[2pt]
 \swarrow\ \text{algebraic realization}
 \qquad
 \text{machine semantics}\ \searrow\\[2pt]
 \text{ordinary bounded step recursion}
 \qquad
 \text{finite-branching Boolean machine profiles}.
\end{array}
\]
The left arrow is the literal deterministic algebraic realization of Lemma~\ref{lem:realize}. The right equivalence is established below after width-family closure.  The left arrow is intentionally not an equality: Lemma~\ref{lem:realize} embeds the bounded-state deterministic core into $\mathcal A_\rho(\mathcal C_{\rm bit})$ and does not identify the full algebra $\mathcal A_\rho(\mathcal C_{\rm bit})$ with a machine class.

For an effective descent $\rho$ and width family $\mathcal W$, define the machine class on its intrinsic profile by
\[
 \mathsf{Prof}_Q(\rho,\mathcal W)
 :=\bigcup_{u\in\mathcal W}
 \TISP_Q\!\left(\delta_\rho(u;N),u(N)\right).
\]
Call $\mathcal W$ \emph{$\rho$-profile-closed} if for every $u\in\mathcal W$, integers $d\ge0$ and $C\ge1$, there is $v\in\mathcal W$ such that, for all sufficiently large $N$,
\[
 v(N)\ge C u(N),
 \qquad
 \delta_\rho(v;N)\ge C\,\delta_\rho(u;N)u(N)^d.
\]
This sufficient condition is only an overhead-absorption hypothesis: it absorbs the bit-cost of a fixed local algebra after width reparameterization and is neither a new complexity scale nor the source of the hierarchy.  The structural point is that the same local schema factors for arbitrary effective descent, exposing descents to a common comparison relation. Lemma~\ref{lem:profileclosure} gives natural sufficient conditions for the canonical rows.

The elementary fact that a bounded configuration graph can simulate a machine is standard.  The role of the next theorem is instead to isolate which coordinate contributes path length and which contributes storage, so that changing the descent while keeping the local transition basis fixed becomes a well-defined resource operation.

\begin{theorem}[Step-recursion time--space profile]\label{thm:profile}
Fix an effective descent $\rho$ and a branch mode $Q$.
\begin{enumerate}[label=(\roman*),leftmargin=2em]
 \item The $Q$-semantics of a width-$u$ dynamics from Definition~\ref{def:bsr} belongs to
 \[
  \TISP_Q\!\left(\delta_\rho(u;N)\,u(N)^{O(1)},\;O(u(N))\right).
 \]
 The pair $(\delta_\rho(u),u)$ is the intrinsic depth--width profile of the dynamics. The displayed $u^{O(1)}$ factor is the machine cost of the fixed local algebra.
 \item Let $M$ be a total machine of branch mode $Q$, using at most $s(N)$ work cells and with branch length at most $t(N)$.  If a width term $u$ satisfies
 \[
  u(N)\ge c_M(s(N)+\mathbf L(N)),
  \qquad
  \delta_\rho(u;N)\ge t(N),
 \]
 then $M$ is represented by the $Q$-semantics of a width-$u$ dynamics from Definition~\ref{def:bsr}.
\end{enumerate}
\end{theorem}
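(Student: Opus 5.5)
Part (i) is a direct machine simulation of the recursion defining $B^Q_{\mathfrak D}$; part (ii) is a compilation of $M$ into a dynamics using Lemma~\ref{lem:config}.

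\emph{Part (i).} Build a machine $M_{\mathfrak D}$ of mode $Q$ whose configuration holds the current state $c$ (which has $O(u(N))$ bits) together with the clock $y$. The clock starts at $K_u(N)$ and has at most $u(N)$ bits.
\begin{enumerate}[label=(\alph*),leftmargin=2em]
 \item Initialization writes $I(x)$ and $K_u(N)$. By Lemma~\ref{lem:widths} and the definition of local maps, this takes $u^{O(1)}$ time and $O(u)$ space.
 \item At each node, $M_{\mathfrak D}$ evaluates $h(x,c)$. If $h=1$ it halts with output $a(x,c)$. If $h=0$ and $y=0$ it rejects.
 \item Otherwise it computes the guards $\eta_i(x,c)$ and branches over the indices $i$ with $\eta_i=1$. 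In alternating mode it first labels the node by $\theta(x,c)$; in mode $\D$ exactly one guard is true, so there is no branching.
 \item On the chosen branch it replaces $c$ by $T_i(x,c)$ and $y$ by $\rho(y)$.
\end{enumerate}
Each local map costs $u^{O(1)}$ time and $O(u)$ space. Since $\rho$ is effective and $y$ has at most $u$ bits, computing $\rho(y)$ costs $u^{O(1)}$ time and $O(u)$ space, and $\rho(y)\le y$ keeps the clock within $u$ bits. Intermediate workspace is reused, so space stays $O(u)$.

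On every branch the clock strictly decreases along the $\rho$-orbit of $K_u(N)$. Hence every branch has at most $D_\rho(K_u(N))=\delta_\rho(u;N)$ steps, each of cost $u^{O(1)}$. Every branch halts, so the machine is total.

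Correctness follows by induction on $D_\rho(y)$: the acceptance value of $M_{\mathfrak D}$ from node $(c,y)$ equals $B^Q_{\mathfrak D}(x,c,y)$. This holds because $\Gamma_\Ex,\Gamma_\All,\Gamma_{\Alt,\theta}$ are exactly the existential, universal and alternating acceptance rules for finite branching.

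\emph{Part (ii).} Use Lemma~\ref{lem:config} to obtain fixed local maps and guards for one step of $M$, valid because $u\ge c_M(s+\mathbf L)$ ensures no legitimate configuration overflows.
\begin{enumerate}[label=(\alph*),leftmargin=2em]
 \item $I(x)$ is the initial configuration.
 \item The successors $T_i$ are the finitely many transition choices and the $\eta_i$ are their applicability tests.
 \item $h$ tests for a halting state and $a$ tests for an accepting one.
 \item $\theta$ reads the $\exists/\forall$ label of the control state.
 \item If some nonhalting configuration has no applicable move, add a dummy guarded successor that moves to a rejecting (respectively accepting) halting state, matching the machine's convention for stuck configurations.
\end{enumerate}
Then prove by induction on $m$ that $B^Q(x,c,y)$ equals $M$'s acceptance value from configuration $c$ whenever every branch from $c$ has length at most $m\le D_\rho(y)$. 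Each step uses one unit of depth, since $D_\rho(\rho(y))=D_\rho(y)-1$ for $y>0$. The premature case $y=0$ with $h=0$ is therefore never reached, because $t(N)\le\delta_\rho(u;N)$ means all branches halt within the available depth. Applying this at $c=I(x)$ and $y=K_u(N)$ gives the claim.

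\emph{Main obstacle.} The delicate points are the bookkeeping that the clock never needs more than $O(u)$ bits, and the matching of conventions for configurations with no applicable move. I would settle the former by noting $\rho(y)<y\le K_u(N)$. I would handle the latter by stating explicitly the standard convention that $M$'s transition relation is total on nonhalting configurations. The depth identity $D_\rho(\rho(y))=D_\rho(y)-1$ follows directly from the definition of $D_\rho$.
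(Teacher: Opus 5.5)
Your proposal is correct and follows the paper's proof essentially step for step. In (i) you use a clocked simulation whose branches have at most $\delta_\rho(u;N)$ steps, each costing $u^{O(1)}$ time in $O(u)$ reusable space; in (ii) you compile $M$ via Lemma~\ref{lem:config}, with the width hypothesis storing configurations and $\delta_\rho(u;N)\ge t(N)$ preventing clock exhaustion, and your induction via $D_\rho(\rho(y))=D_\rho(y)-1$ spells out what the paper states in one line.

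One small point: the dummy-successor device in your item (e) adds a step, so it can exhaust the clock when $\delta_\rho(u;N)=t(N)$ and the stuck configuration should accept. It is cleaner to let $h$ flag stuck configurations directly, or to adopt the totality convention you also mention.
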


\begin{proof}
For (i), recursion decreases the clock at every recursive transition, so every branch has at most $\delta_\rho(u;N)$ transitions and $\delta_\rho(u;N)+1$ nodes.  Since width terms have positive clocks, $\delta_\rho(u;N)\ge1$.  At each node, the fixed tuple of $q$ guards is evaluated and the machine branches exactly to the enabled successors. In deterministic mode there is exactly one.  Since $q$ is fixed, guard evaluation, computing $\rho(y)$ on the $O(u)$-bit clock, and applying the selected fixed local map together cost $u^{O(1)}$ time and $O(u)$ reusable workspace, by effectiveness of $\rho$ and $u\ge\mathbf L$.  Thus existential, universal, and alternating aggregation is implemented by the corresponding ordinary machine branching, with no oracle predicate hidden in a guard.

For (ii), use Lemma~\ref{lem:config} for the initialization, successors, guards, halting test, acceptance test, and alternating type.  The width hypothesis stores every configuration, and the depth hypothesis prevents clock exhaustion before any machine branch halts.  The recursion therefore has exactly the same computation tree and branch aggregation as $M$.
\end{proof}

\begin{theorem}[Resource factorization theorem]\label{thm:exactprofile}
If $\mathcal W$ is $\rho$-profile-closed, then for every Boolean branch mode $Q\in\{\D,\Ex,\All,\Alt\}$,
\[
 \boxed{\SR^Q_{\rho}[\mathcal W]
 =\mathsf{Prof}_Q(\rho,\mathcal W).}
\]
This is equality after union over the width family, not a pointwise same-width time identity.  The local factor $u^{O(1)}$ from Theorem~\ref{thm:profile}(i) is absorbed by passing to a larger $v\in\mathcal W$.
\end{theorem}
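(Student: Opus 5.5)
The proof is a double inclusion, obtained by combining the two halves of Theorem~\ref{thm:profile} with the $\rho$-profile-closure hypothesis. The closure step absorbs overhead in both directions.

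\emph{Inclusion $\SR^Q_\rho[\mathcal W]\subseteq\mathsf{Prof}_Q(\rho,\mathcal W)$.}
Fix a width-$u$ dynamics $\mathfrak D$ with $u\in\mathcal W$.
\begin{enumerate}[label=(\alph*),leftmargin=2em]
 \item By Theorem~\ref{thm:profile}(i), its $Q$-semantics lies in $\TISP_Q(\delta_\rho(u;N)\,u(N)^{d},\,C u(N))$ for some constants $d,C$. These depend only on the fixed local maps, the fixed arity $q$, and the polynomial cost of $\rho$.
 \item Apply $\rho$-profile-closure with these $d$ and $C$. This gives $v\in\mathcal W$ with $v\ge Cu$ and $\delta_\rho(v;N)\ge C\,\delta_\rho(u;N)u(N)^d$ for all large $N$.
 \item Then the language lies in $\TISP_Q(\delta_\rho(v;N),v(N))$, since $\TISP_Q$ is stated with $O(\cdot)$ bounds. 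Finitely many small $N$ are hard-coded, per the asymptotic convention.
\end{enumerate}
In the universal mode, the aggregation $\Gamma_\All$ is implemented directly by universal branching. This matches the convention $\TISP_\All$ (complements of existential classes), because a universal decider accepts iff no branch rejects.

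\emph{Inclusion $\mathsf{Prof}_Q(\rho,\mathcal W)\subseteq\SR^Q_\rho[\mathcal W]$.}
Let $L\in\TISP_Q(\delta_\rho(u;N),u(N))$ with $u\in\mathcal W$. Then $L$ is decided by a total mode-$Q$ machine $M$ with branch length at most $C_1\delta_\rho(u;N)$ and at most $C_1u(N)$ work cells.
\begin{enumerate}[label=(\alph*),leftmargin=2em]
 \item Apply closure with $d=0$ and $C:=\max(C_1,\,c_M(C_1+1))$. This yields $v\in\mathcal W$ with $\delta_\rho(v;N)\ge C_1\delta_\rho(u;N)$.
 \item Since $u\ge\mathbf L$, we also get $v\ge c_M(C_1u+\mathbf L)$.
 \item Both hypotheses of Theorem~\ref{thm:profile}(ii) now hold at width $v$, for large $N$. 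So $M$ is represented by the $Q$-semantics of a width-$v$ dynamics, and $L\in\SR^Q_\rho[\mathcal W]$.
\end{enumerate}
For small $N$, the initialization may route to a halting state with hard-coded answer. This is the fixed finite case distinction permitted in the conventions.

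\emph{Main obstacle.} The delicate point is the first inclusion's bookkeeping, not the simulation itself. Two things must be checked.
\begin{enumerate}[label=(\alph*),leftmargin=2em]
 \item The constant $d$ in $u^{O(1)}$ must be uniform for the fixed dynamics. It cannot depend on $N$. This holds because the local maps are fixed compositions of primitives, each polynomial in the bit length, and $\rho$ is effective on $O(u)$-bit clocks.
 \item The clock value $K_u$ must be initialized within the same budget. Lemma~\ref{lem:widths} gives this in $u^{O(1)}$ time and $O(u)$ space, which is again absorbed by closure.
\end{enumerate}
In the alternating and universal cases one must also check that the machine branches only on the enabled guards, each of which is a locally computed predicate. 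This ensures no oracle is introduced. It is exactly the remark at the end of the proof of Theorem~\ref{thm:profile}(i), so I expect no further difficulty there.
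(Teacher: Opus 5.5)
Your proposal is correct and takes the same route as the paper's proof. Soundness combines Theorem~\ref{thm:profile}(i) with profile closure at the dynamics' constants $d,C$, and completeness uses profile closure with $d=0$ followed by Theorem~\ref{thm:profile}(ii); your explicit choice $C=\max(C_1,c_M(C_1+1))$ together with $u\ge\mathbf L$ simply makes precise the paper's phrase that $v$ is ``large enough'' to store every configuration and dominate the branch-time bound.
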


\begin{proof}
For soundness, Theorem~\ref{thm:profile}(i) gives path time $O(\delta_\rho(u)u^d)$ and space $O(Cu)$ for fixed constants $d,C$ depending only on the dynamics.  Profile closure supplies $v\in\mathcal W$ with $v\ge Cu$ and $\delta_\rho(v)\ge C\delta_\rho(u)u^d$, so the language lies in $\mathsf{Prof}_Q(\rho,\mathcal W)$.

Conversely, let a total machine $M$ witness membership in $\TISP_Q(\delta_\rho(u),u)$ for some $u\in\mathcal W$.  Its hidden time and space constants and the configuration constant of Lemma~\ref{lem:config} are fixed.  Profile closure with $d=0$ gives $v\in\mathcal W$ large enough that $v$ stores every configuration and $\delta_\rho(v)$ dominates the branch-time bound.  Theorem~\ref{thm:profile}(ii) then compiles $M$ into the $Q$-semantics of a width-$v$ dynamics.
\end{proof}

\begin{corollary}[Generator-to-resource theorem]\label{cor:generator-profile}
Let $\rho_\varphi$ be an effective generalized-inverse descent as in Theorem~\ref{thm:inverse-depth}, and let $\mathcal W$ be $\rho_\varphi$-profile-closed.  Then, for every Boolean branch mode $Q$,
\[
 \SR^Q_{\rho_\varphi}[\mathcal W]
 =\bigcup_{u\in\mathcal W}
 \TISP_Q\!\left(
   \min\{t:2^{u(N)}-1\le\varphi^{[t]}(0)\},
   u(N)
 \right).
\]
Hence the iteration growth of the step generator $\varphi$, together with width, determines the represented machine resource family after the fixed local overhead is absorbed by passing to larger widths inside $\mathcal W$.
\end{corollary}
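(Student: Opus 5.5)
This corollary is a direct composition of two earlier results, so the plan is a short substitution argument rather than a new construction.

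First, since $\rho_\varphi$ is assumed effective, it is an effective descent in the sense of the Definitions section. The width family $\mathcal W$ is assumed $\rho_\varphi$-profile-closed. Theorem~\ref{thm:exactprofile} therefore applies verbatim with $\rho:=\rho_\varphi$ and gives, for each Boolean mode $Q\in\{\D,\Ex,\All,\Alt\}$,
\[
 \SR^Q_{\rho_\varphi}[\mathcal W]=\mathsf{Prof}_Q(\rho_\varphi,\mathcal W)
 =\bigcup_{u\in\mathcal W}\TISP_Q\!\left(\delta_{\rho_\varphi}(u;N),u(N)\right).
\]

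Second, I rewrite the time coordinate using the second identity of Theorem~\ref{thm:inverse-depth}. That identity states that for every width term $u$ and every $N$,
\[
 \delta_{\rho_\varphi}(u;N)=\min\{t\ge0:2^{u(N)}-1\le\varphi^{[t]}(0)\}.
\]
This is an exact pointwise equality of functions of $N$, not merely an asymptotic one. Substituting it into each summand of the union therefore changes nothing in the $\TISP_Q$ classes, and the displayed formula follows. The only thing to check is that the minimum is attained. It is, because $\varphi(z)\ge z+1$ gives $\varphi^{[t]}(0)\ge t$, so the orbit is unbounded.

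Third, the interpretive sentence ("the iteration growth of $\varphi$ together with width determines the represented resource family") needs no further argument. The right-hand side depends only on the orbit $(\varphi^{[t]}(0))_t$ and on $\mathcal W$. The absorption of the local $u^{O(1)}$ overhead is exactly the soundness half of Theorem~\ref{thm:exactprofile}, which passes to a larger $v\in\mathcal W$.

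I do not expect a real obstacle here. The only point needing care is confirming that the hypotheses of Theorem~\ref{thm:exactprofile} are met literally: effectiveness of the descent, and profile closure stated with respect to $\delta_{\rho_\varphi}$. Both are given as assumptions of the corollary, so I will state this explicitly rather than re-derive anything.
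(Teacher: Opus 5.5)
Your proposal is correct and matches the paper's proof, which simply combines Theorem~\ref{thm:exactprofile} (with $\rho:=\rho_\varphi$) and the pointwise identity of Theorem~\ref{thm:inverse-depth}. Your added check that the minimum exists, via $\varphi^{[t]}(0)\ge t$, is a harmless extra detail that the paper leaves implicit.
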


\begin{proof}
Combine Theorem~\ref{thm:inverse-depth} with Theorem~\ref{thm:exactprofile}.
\end{proof}

\begin{corollary}[Profile order theorem]\label{cor:compare}
Let $\rho,\sigma$ be effective descents and let $\mathcal W$ be profile-closed for both.  If for every $u\in\mathcal W$ some $v\in\mathcal W$ satisfies
\[
 u(N)=O(v(N)),\qquad \delta_\rho(u;N)=O(\delta_\sigma(v;N)),
\]
then $\SR^Q_\rho[\mathcal W]\subseteq\SR^Q_\sigma[\mathcal W]$ for every Boolean branch mode $Q$.  Mutual domination gives equality.
\end{corollary}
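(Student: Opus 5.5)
The plan is to reduce the comparison to a comparison of machine classes, using Theorem~\ref{thm:exactprofile} for both descents. Since $\mathcal W$ is profile-closed for $\rho$ and for $\sigma$, we have
\[
 \SR^Q_\rho[\mathcal W]=\mathsf{Prof}_Q(\rho,\mathcal W),
 \qquad
 \SR^Q_\sigma[\mathcal W]=\mathsf{Prof}_Q(\sigma,\mathcal W).
\]
It therefore suffices to show $\mathsf{Prof}_Q(\rho,\mathcal W)\subseteq\mathsf{Prof}_Q(\sigma,\mathcal W)$.

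Take $L\in\TISP_Q(\delta_\rho(u;N),u(N))$ for some $u\in\mathcal W$, and pick $v\in\mathcal W$ as in the hypothesis. Then there are constants $C_1,C_2$ with
\[
 u(N)\le C_1v(N),\qquad \delta_\rho(u;N)\le C_2\,\delta_\sigma(v;N)
\]
for all sufficiently large $N$. Because the $\TISP_Q$ classes are defined with $O(\cdot)$ bounds on both path time and work space, a witness machine with time $O(\delta_\rho(u))$ and space $O(u)$ also has time $O(\delta_\sigma(v))$ and space $O(v)$. Finite exceptional lengths are absorbed by the asymptotic convention stated in the machine preliminaries. Hence $L\in\TISP_Q(\delta_\sigma(v;N),v(N))\subseteq\mathsf{Prof}_Q(\sigma,\mathcal W)$, and the inclusion follows.

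For equality under mutual domination, I apply the same argument with $\rho$ and $\sigma$ swapped and combine the two inclusions.

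The only delicate point, and the expected main obstacle, is the $O$-monotonicity of $\TISP_Q$: the constants hidden in the hypothesis must be absorbed into the $O$ of the target class, not into a change of width. This holds by definition, since $\TISP_Q(t,s)$ uses $O(t)$ and $O(s)$, and it needs no time-constructibility assumptions because only an upper bound on every branch is required. If one preferred to avoid relying on that $O$ slack, one could instead use profile closure for $\sigma$ with $d=0$ to enlarge $v$ to some $v'$ satisfying $v'\ge C_1v$ and $\delta_\sigma(v')\ge C_2\delta_\sigma(v)$. One would then compile the machine directly via Theorem~\ref{thm:profile}(ii), exactly as in the converse half of the proof of Theorem~\ref{thm:exactprofile}.
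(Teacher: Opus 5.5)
Your proposal is correct and follows the paper's own argument: apply Theorem~\ref{thm:exactprofile} to each descent, and note that the domination constants are absorbed by the $O(\cdot)$ slack already built into $\TISP_Q$. Your fallback route through profile closure and Theorem~\ref{thm:profile}(ii) also works, but it is not needed.
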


\begin{proof}
By Theorem~\ref{thm:exactprofile}, a $\rho$-language has a machine using $O(u)$ space and $O(\delta_\rho(u))$ path time.  The displayed domination places that machine on a $\sigma$ profile. Apply Theorem~\ref{thm:exactprofile} again.
\end{proof}

\subsection{Profile preorder and quotient}

Fix a width family $\mathcal W$ and let $\mathfrak R_{\mathcal W}$ be the effective descents $\rho$ for which $\mathcal W$ is $\rho$-profile-closed.  On $\mathfrak R_{\mathcal W}$ write $\rho\preceq_{\mathcal W}\sigma$ for the domination relation in Corollary~\ref{cor:compare}.  This is a preorder: reflexivity is immediate and transitivity follows by composing the width witnesses.  Let $\rho\asymp_{\mathcal W}\sigma$ denote its mutual-domination equivalence, and write $\rho\prec_{\mathcal W}\sigma$ when $\rho\preceq_{\mathcal W}\sigma$ but $\sigma\not\preceq_{\mathcal W}\rho$.  The preorder induces a partial order on $\mathfrak R_{\mathcal W}/\asymp_{\mathcal W}$.  For every $Q$, the map $\rho\mapsto\SR^Q_\rho[\mathcal W]$ is monotone under $\preceq_{\mathcal W}$ and factors through $\asymp_{\mathcal W}$.

The quotient is substantially richer than the two canonical rows.  We first construct an infinite hierarchy beginning at polynomial depth and continuing through quasipolynomial scales.  For a fixed integer $r\ge1$, put
\[
 q_r(t):=\left\lceil \lambda(t+1)^{1/r}\right\rceil,
 \qquad
 \Delta_t^{(r)}:=2^{2^{q_r(t)}},
\]
and define
\[
 a_0^{(r)}:=0,
 \qquad
 a_{t+1}^{(r)}:=a_t^{(r)}+\Delta_t^{(r)}.
\]
The gaps $\Delta_t^{(r)}$ are nondecreasing, so Theorem~\ref{thm:orbit} supplies a generator $\varphi_r$ with orbit $(a_t^{(r)})_{t\ge0}$.  Let
\[
 \tau_r:=\rho_{\varphi_r}.
\]

\begin{lemma}[Saturated orbit evaluation]\label{lem:saturated-orbit}
Fix $r\ge1$.  Given a $b$-bit integer $Y>0$ and $0\le t\le Y+2$, the truncated value
\[
 \min\{a_t^{(r)},Y+1\}
\]
can be computed in time polynomial in $b$ and space $O(b)$.
\end{lemma}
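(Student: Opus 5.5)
The computation will run blockwise, exploiting the fact that the gap $\Delta_s^{(r)}=2^{2^{q_r(s)}}$ depends on $s$ only through $q_r(s)=\lceil\lambda(s+1)^{1/r}\rceil$. This quantity is constant on long blocks of consecutive $s$, and it takes only few distinct values. Since $\lambda(s+1)\le\lambda(Y+3)\le b+2$, we have $q_r(s)\le q_r(t)\le\lceil (b+2)^{1/r}\rceil=O(b)$. For each value $q$, the set $\{s<t: q_r(s)=q\}$ is an interval $[s_q,e_q)$. Its endpoints are determined by the condition $(q-1)^r<\lambda(s+1)\le q^r$, that is, $2^{(q-1)^r}\le s+1<2^{q^r}$, intersected with $[0,t)$.

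The key steps are the following.

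\textbf{Step 1: block endpoints.} For each $q$ from $1$ up to $q_r(t-1)$, compute the endpoints. The powers $2^{(q-1)^r}$ matter only when $(q-1)^r\le b+3$; otherwise they exceed $t$ and are truncated to $t$. Each endpoint is therefore an $O(b)$-bit number, obtained with polynomially many bit operations. The root is of fixed degree $r$, so testing $(q-1)^r<m\le q^r$ for $m=\lambda(s+1)$ is direct arithmetic on $O(\log b)$-bit numbers.

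\textbf{Step 2: saturated accumulation.} Maintain a running value $S=\min\{a^{(r)}_{s_q},Y+1\}$ and add block by block:
\[
S\leftarrow\min\bigl\{S+(e_q-s_q)\,2^{2^q},\;Y+1\bigr\}.
\]
The saturated product must be computed without ever writing $2^{2^q}$ when it is huge. If $2^q>b+1$, then $2^{2^q}>Y+1$, so any nonempty block saturates immediately. Otherwise $2^{2^q}$ has at most $b+2$ bits, and the product $(e_q-s_q)2^{2^q}$ is a shift of an $O(b)$-bit number, which we compare against $Y+1$ before committing. Once $S=Y+1$ we stop, since truncation is monotone: $\min\{a+\Delta,Y+1\}=\min\{\min\{a,Y+1\}+\Delta,Y+1\}$ for $\Delta\ge0$. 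This identity is what justifies accumulating in saturated form, and $a_t^{(r)}=\sum_q (e_q-s_q)2^{2^q}$ follows from the recursion $a_{t+1}=a_t+\Delta_t$.

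\textbf{Step 3: resource count.} There are $O(b)$ blocks, each handled with a constant number of arithmetic operations on $O(b)$-bit numbers. We keep only the loop counter $q$, the endpoints, and $S$, all reusable, so the total is polynomial time and $O(b)$ space. The edge cases $t=0$, which gives value $0$, and empty blocks are trivial.

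The main obstacle is Step 2's saturation logic. Neither $\Delta_s^{(r)}$ nor intermediate sums may be written out when they exceed $Y+1$, and the space accounting must stay linear rather than $O(b\log b)$. The first thing to try is to handle every product by a bit-length comparison: the length of $(e_q-s_q)2^{2^q}$ is $\lambda(e_q-s_q)+2^q$, so it exceeds $Y+1$ whenever this exceeds $b+1$, and is computed exactly otherwise. The remaining check is that all counters, namely $q$, $2^q$, and the bounds on $(q-1)^r$, fit in $O(\log b)$ bits.
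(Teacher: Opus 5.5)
Your proposal is correct and follows essentially the same route as the paper. Both arguments group the gaps into the $O(b)$ blocks on which $q_r$ is constant, compute the block endpoints from fixed-degree powers truncated at $t$, and evaluate the capped sum using the threshold that $2^{2^q}$ already exceeds the cap once $2^q$ is of order $b$; your identity $\min\{a+\Delta,Y+1\}=\min\{\min\{a,Y+1\}+\Delta,Y+1\}$ simply makes explicit the paper's ``capped addition and multiplication.''
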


\begin{proof}
For fixed $r$, the value $q_r(i)=q$ precisely on blocks for which
\[
 (q-1)^r<\lambda(i+1)\le q^r.
\]
Thus the partial sum defining $a_t^{(r)}$ can be grouped into at most $q_r(t)=O(b)$ block contributions of the form
\[
 c_q(t)\,2^{2^q},
\]
where the block multiplicity $c_q(t)\le t$ and its endpoints are computable from fixed-degree integer powers and binary shifts, truncated at $t$.  All endpoints and multiplicities therefore use $O(b)$ bits.

It is never necessary to construct a doubly exponential summand whose binary length exceeds the cap.  If $2^q\ge b$, then $2^{2^q}>Y$, so any positive contribution from that block makes the truncated sum equal to $Y+1$.  Otherwise $2^q<b$, and the summand itself has at most $b$ bits.  Capped addition and multiplication at $Y+1$ then evaluate every remaining block using $O(b)$ space and polynomial time.  Since there are only $O(b)$ relevant blocks, the stated bounds follow.
\end{proof}

\begin{theorem}[Infinite hierarchy in the polynomial-width quotient]\label{thm:quotient-hierarchy}
For every fixed $r\ge1$, the descent $\tau_r$ is effective, $\mathcal W_0$ is $\tau_r$-profile-closed, and
\[
 \delta_{\tau_r}(u;N)
 =2^{\Theta((\log_2 u(N))^r)}
 \qquad(u\in\mathcal W_0).
\]
Consequently, for every fixed $l\ge1$,
\[
 [\rho_{1,l}]_{\asymp_{\mathcal W_0}}
 =[\tau_1]_{\asymp_{\mathcal W_0}}
 <[\tau_2]_{\asymp_{\mathcal W_0}}
 <[\tau_3]_{\asymp_{\mathcal W_0}}
 <\cdots
 <[\rho_{0,l}]_{\asymp_{\mathcal W_0}}.
\]
Thus, even after all polynomial width changes are factored out, the descent coordinate has infinitely many inequivalent levels between the divisive and predecessor rows.  The strictness is in the profile quotient; it does not assert language-class separation.
\end{theorem}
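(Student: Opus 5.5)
By Theorem~\ref{thm:inverse-depth}, $\delta_{\tau_r}(u;N)$ is the least $t$ with $K_u(N)\le a^{(r)}_t$. So everything reduces to a two-sided estimate of $\log_2 a^{(r)}_t$, followed by arithmetic in the exponent. The plan is: (1) the orbit estimate; (2) the depth formula; (3) effectiveness of $\tau_r$; (4) $\tau_r$-profile-closure of $\mathcal W_0$; (5) the comparisons in the quotient.

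\emph{Orbit estimate.} The gaps are nondecreasing and $a^{(r)}_t$ is a sum of $t$ of them, so
\[
 \Delta^{(r)}_{t-1}\le a^{(r)}_t\le t\,\Delta^{(r)}_{t-1}.
\]
Hence
\[
 2^{q_r(t-1)}\le \log_2 a^{(r)}_t\le 2^{q_r(t-1)}+\log_2 t .
\]
Since $q_r(t)=\lceil\lambda(t+1)^{1/r}\rceil$, we have $\log_2 a^{(r)}_t=2^{(\log_2 t)^{1/r}+O(1)}$. The $\log_2 t$ term is negligible, because $2^{(\log t)^{1/r}}\ge(\log t)^{\Omega(1)}$; for $r=1$ both terms are $\Theta(t)$ anyway.

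\emph{Depth formula.} Setting $u=\log_2 a_t$ up to one bit and inverting gives $(\log_2 t)^{1/r}=\log_2 u+O(1)$. Therefore
\[
 \log_2 t=(\log_2 u+O(1))^r=\Theta((\log_2 u)^r)
\]
for $u\ge 2$, that is, $\delta_{\tau_r}(u)=2^{\Theta((\log_2 u)^r)}$. I expect this step to be the main obstacle: the ceilings in $q_r$ and the block jumps of $\Delta$ force the constants into the exponent rather than a multiplicative $\Theta$. I would fix explicit constants $0<c_1\le c_2$ with
\[
 2^{c_1(\log u)^r}\le\delta_{\tau_r}(u)\le 2^{c_2(\log u)^r}
\]
for all large $u$, since these explicit constants are what the remaining steps use. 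For $r=1$ this gives $\delta_{\tau_1}(u)=u^{\Theta(1)}$; in fact the orbit estimate gives $\Theta(u)$ directly.

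\emph{Effectiveness.} To compute $\tau_r(y)$ on a $b$-bit $y$, I would binary search for the interval index $t\le y+2$ with $a_{t+1}\le y\le a_{t+2}$. Each probe uses the capped values $\min\{a^{(r)}_t,y+1\}$ from Lemma~\ref{lem:saturated-orbit}, and the search takes $O(b)$ rounds. Then I apply the explicit case formula for $\rho_{\varphi_a}$ from the proof of Corollary~\ref{cor:intermediate}, whose only inputs are $a_t$, $a_{t+1}$ and $\Delta_t$. All of these are at most $y$, or are needed only through capped comparisons, so the computation runs in polynomial time and $O(b)$ space.

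\emph{Profile closure.} Given $u\in\mathcal W_0$, $C$ and $d$, take $v:=C\,u^{k}\in\mathcal W_0$ with $k$ chosen so that $c_1k^r\ge c_2+d+1$. Then
\[
 \delta(v)\ge 2^{c_1k^r(\log u)^r}\ge 2^{c_2(\log u)^r}\,u^{d}\,2^{\log u}\ge C\,\delta(u)\,u^d
\]
for large $N$. This uses $(\log u)^r\ge\log u$ and $u\to\infty$. That $\rho_{0,l}$ and $\rho_{1,l}$ lie in $\mathfrak R_{\mathcal W_0}$ I would take from Lemma~\ref{lem:profileclosure}.

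\emph{Comparisons in the quotient.} All of these follow from the exponent bounds.
\begin{itemize}[leftmargin=2em]
 \item $\tau_1\asymp\rho_{1,l}$: both profiles are $u^{\Theta(1)}$, or $\Theta(u)$ by the orbit estimate, while $\delta_{\rho_{1,l}}=\Theta(u)$ by Lemma~\ref{lem:duality}. Polynomial witnesses $v=u^k$ absorb the difference in both directions.
 \item $\tau_r\preceq\tau_{r+1}$: take $v=u$, since $c_2(\log u)^r\le c_1(\log u)^{r+1}+O(1)$.
 \item $\tau_{r+1}\not\preceq\tau_r$: test at $u=\mathbf N$. Every $v\in\mathcal W_0$ satisfies $v\le N^{k}$ for some $k$, so $\delta_{\tau_r}(v)\le 2^{c_2k^r(\log N)^r}$. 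This is $o\bigl(2^{c_1(\log N)^{r+1}}\bigr)$, while $\delta_{\tau_{r+1}}(N)\ge 2^{c_1(\log N)^{r+1}}$.
 \item $\tau_r\preceq\rho_{0,l}$: take $v=u$, because $2^{\Theta((\log u)^r)}=O(2^u)$ and $\delta_{\rho_{0,l}}(u)=\Theta(2^u)$ by Lemma~\ref{lem:duality}.
 \item $\rho_{0,l}\not\preceq\tau_r$: with $u=\mathbf N$, the value $2^N$ exceeds every quasipolynomial $2^{O((\log N)^r)}$ that is reachable from polynomial $v$.
\end{itemize}
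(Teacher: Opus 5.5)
Your proposal is correct and follows the paper's proof essentially step for step. The shared steps are: the sandwich $\Delta_{t-1}^{(r)}\le a_t^{(r)}\le t\,\Delta_{t-1}^{(r)}$ inverted through the threshold law; binary search over saturated orbit values for effectiveness; a multiple of $u^m$ with $c_1m^r>c_2+d+1$ for profile closure; and the test width $u=\mathbf N$ for the strict comparisons. The only details you leave implicit are the trivial case $0<y<a_1^{(r)}$, where $\tau_r(y)=0$, and the bound $t\le y$ on the search range, which the paper gets from $a_t^{(r)}\ge 4t$.
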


\begin{proof}
The gaps are nondecreasing by construction, so Theorem~\ref{thm:orbit} gives $\varphi_r$.  We first verify effectiveness without ever materializing the large orbit values.  Since every gap satisfies $\Delta_i^{(r)}\ge4$, we have $a_t^{(r)}\ge4t$; hence every interval relevant to an input $y$ has index $t\le y$.  The case $y=0$ is immediate, and if $0<y\le a_1^{(r)}$ then $\tau_r(y)=0$.  Otherwise, on a $b$-bit argument $y$, binary search over $0\le t\le y$ uses $O(b)$ trials.  Lemma~\ref{lem:saturated-orbit} evaluates each comparison with $a_{t+1}^{(r)}$ or $a_{t+2}^{(r)}$ after saturation at $y+1$, so the unique interval
\[
 a_{t+1}^{(r)}\le y<a_{t+2}^{(r)}
\]
can be located in polynomial time and $O(b)$ space.  Once it is located, the explicit orbit construction from Theorem~\ref{thm:orbit} gives
\[
 \tau_r(y)=
 \begin{cases}
  a_t^{(r)}+\bigl(y-a_{t+1}^{(r)}\bigr),
    &y\le a_{t+1}^{(r)}+\Delta_t^{(r)}-1,\\
  a_{t+1}^{(r)},
    &y>a_{t+1}^{(r)}+\Delta_t^{(r)}-1,
 \end{cases}
\]
with $\tau_r(0)=0$.  The final comparison with $\Delta_t^{(r)}$ is handled by the same cap: if $2^{q_r(t)}\ge b$, then $\Delta_t^{(r)}>y$, while otherwise the summand has at most $b$ bits.  Hence $\tau_r$ is effective in polynomial time and $O(b)$ space.

For $t\ge1$, monotonicity of the gaps gives
\[
 \Delta_{t-1}^{(r)}
 \le a_t^{(r)}
 \le t\,\Delta_{t-1}^{(r)}.
\]
Since
\[
 q_r(t-1)=\Theta((\log_2 t)^{1/r}),
\]
it follows that
\[
 \log_2 a_t^{(r)}
 =2^{\Theta((\log_2 t)^{1/r})}.
\]
By the threshold law, $\delta_{\tau_r}(u)$ is the least $t$ with $2^u-1\le a_t^{(r)}$.  Inverting the preceding estimate yields
\[
 \delta_{\tau_r}(u)=2^{\Theta((\log_2 u)^r)}.
\]

To verify profile closure, choose constants $c_r,C_r>0$ such that for all sufficiently large $u$,
\[
 2^{c_r(\log u)^r}
 \le \delta_{\tau_r}(u)
 \le 2^{C_r(\log u)^r}.
\]
Given $u\in\mathcal W_0$ and fixed $d,C$, choose a fixed integer $m$ so large that $c_rm^r>C_r+d+1$, and let $v$ be a sufficiently large fixed multiple of $u^m$, still in $\mathcal W_0$.  Since $\log v=m\log u+O(1)$ and $\log u\le(\log u)^r$ for large $u$,
\[
 \delta_{\tau_r}(v)
 \ge 2^{c_r(\log v)^r}
 \ge C\,2^{C_r(\log u)^r}u^d
 \ge C\,\delta_{\tau_r}(u)u^d
\]
for all sufficiently large $N$, while $v\ge Cu$.  Hence $\mathcal W_0$ is $\tau_r$-profile-closed.

For $r<s$, the same width gives $\tau_r\preceq_{\mathcal W_0}\tau_s$, because $(\log u)^r=o((\log u)^s)$.  The reverse domination is impossible: take $u=\mathbf N$.  Every $v\in\mathcal W_0$ satisfies $v(N)=N^{O(1)}$ by Lemma~\ref{lem:widths}, and therefore
\[
 \delta_{\tau_r}(v)
 =2^{O((\log N)^r)}
 =o\!\left(2^{\Omega((\log N)^s)}\right)
 =o(\delta_{\tau_s}(\mathbf N)).
\]
Thus $\tau_r\prec_{\mathcal W_0}\tau_s$.  For $r=1$, the depth is polynomial in $u$, so power reparameterization in $\mathcal W_0$ gives $\tau_1\asymp_{\mathcal W_0}\rho_{1,l}$.  Finally, every fixed $r$ satisfies $2^{(\log u)^r}=2^{o(u)}$, so $\tau_r\preceq_{\mathcal W_0}\rho_{0,l}$; the reverse domination fails at $u=\mathbf N$ because every polynomially bounded $v$ has $\delta_{\tau_r}(v)=2^{o(N)}$, whereas $\delta_{\rho_{0,l}}(\mathbf N)=2^{\Theta(N)}$.
\end{proof}

The local primitives and branching semantics are fixed throughout the hierarchy; only descent varies.  Hence its strictness shows that descent resource cannot always be traded for polynomially more mutable state.

\begin{corollary}[Further quotient collapses]\label{cor:collapse}
For every fixed $k,l\ge1$,
\[
 \rho_k^{\rm sh}\asymp_{\mathcal W_0}\rho_{1,l},
 \qquad
 \rho_k^{\rm mid}\asymp_{\mathcal W_0}\rho_{0,l}.
\]
By Corollary~\ref{cor:compare}, the corresponding $\SR^Q$ classes are therefore equal for every Boolean branch mode $Q$.  For $k>1$, the distinct pointwise depth curves
\[
 u^{1/k},\quad u,\quad 2^{u/k},\quad 2^u
\]
therefore collapse in pairs in the descent quotient.  Together with Theorem~\ref{thm:quotient-hierarchy}, this shows that width reparameterization removes some apparent depth distinctions but not all of them: the quotient has both nontrivial identifications and infinitely many strict levels.
\end{corollary}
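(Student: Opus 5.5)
We prove both equivalences directly from the definition of $\preceq_{\mathcal W_0}$ in Corollary~\ref{cor:compare}: exhibit explicit width witnesses in each direction, using the depth formulas of Corollary~\ref{cor:intermediate} and Lemma~\ref{lem:duality}. Before comparing, we must check that $\mathcal W_0$ is profile-closed for $\rho_k^{\rm sh}$ and $\rho_k^{\rm mid}$, so that all four descents lie in $\mathfrak R_{\mathcal W_0}$. (For the canonical rows we take this from Lemma~\ref{lem:profileclosure}.)

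\emph{Closure for $\rho_k^{\rm sh}$.} The depth is $\Theta(u^{1/k})$. Given $u\in\mathcal W_0$ and constants $d,C$, take $v$ to be a large constant multiple of $u^{m}$ with $m/k > 1/k+d$, say $m=kd+2$. Then $\delta(v)=\Theta(u^{m/k})$, which dominates $C\,u^{1/k}u^d$ for large $N$, and $v\ge Cu$.

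\emph{Closure for $\rho_k^{\rm mid}$.} The depth is $\Theta(2^{u/k})$. Take $v=c\,u$ with an integer $c$ large enough, say $c\ge k(d+2)+C$. Then
\[
 2^{cu/k}\ge 2^{u/k}\cdot 2^{(d+1)u}\ge C\,2^{u/k}u^d
\]
for large $u$, and $u\to\infty$ because $u\ge\mathbf L$.

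\emph{First equivalence.} For $\rho_k^{\rm sh}\preceq\rho_{1,l}$, use the same width $v=u$: then $u^{1/k}=O(u)=O(\lceil u/l\rceil)$. For the converse $\rho_{1,l}\preceq\rho_k^{\rm sh}$, take $v=u^k$, which lies in $\mathcal W_0$ because it is closed under products. Then $\delta_{\rho_k^{\rm sh}}(v)=\Theta(u)$, which dominates $\lceil u/l\rceil$, and $u=O(u^k)$.

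\emph{Second equivalence.} For $\rho_k^{\rm mid}\preceq\rho_{0,l}$, use $v=u$: then $2^{u/k}=O(2^u/l)$. For the converse, take $v=k\cdot u$, realized as a $k$-fold sum of $u$. Then $\delta_{\rho_k^{\rm mid}}(ku)=\Theta(2^u)$, which matches $\Theta(2^u)$ up to constants. The class equalities for every $Q$ then follow from Corollary~\ref{cor:compare} applied in both directions.

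\emph{The remaining assertion.} The distinct pointwise curves collapse in pairs because Corollary~\ref{cor:intermediate} already gives the strict pointwise gaps for $k>1$, while the above shows the two members of each pair are identified in the quotient. Combined with Theorem~\ref{thm:quotient-hierarchy}, the quotient has both nontrivial identifications and infinitely many strict levels.

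\emph{Main obstacle.} The only delicate point is uniformity of the $\Theta$ constants. The bounds of Corollary~\ref{cor:intermediate} hold only for sufficiently large $u$, and the witnesses must be genuine width terms rather than real scalings like $u/k$. Both issues are handled by using integer multiples and powers, and by noting that $u(N)\to\infty$ along every term. If Lemma~\ref{lem:profileclosure} does not cover $\rho_{0,l}$ or $\rho_{1,l}$ on $\mathcal W_0$, we verify closure for them by the same multiple and power arguments used above.
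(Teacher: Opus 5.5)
Your proposal is correct and follows the paper's proof. You use the same width witnesses in Corollary~\ref{cor:compare}: $v=u$ and $v=u^k$ for the shallow/divisive pair, and $v=u$ and $v=ku$ for the intermediate/predecessor pair. Profile closure of the two new descents comes, as in the paper, from power-cofinality and scaling closure of $\mathcal W_0$; you simply verify it more explicitly.

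Your closing hedge is unnecessary. Lemma~\ref{lem:widths} makes $\mathcal W_0$ scaling-closed and power-cofinal, so Lemma~\ref{lem:profileclosure} does cover both canonical rows.
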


\begin{proof}
All four descents are profile-closed on $\mathcal W_0$: power cofinality handles the divisive and shallow polynomial factors, while scaling closure handles the predecessor and intermediate exponential factors.  For the shallow profile, $u^{1/k}=O(u)$ gives one domination, while $v=u^k\in\mathcal W_0$ gives $v^{1/k}=u$.  For the intermediate exponential profile, $2^{u/k}=O(2^u)$ gives one domination, while $v=ku\in\mathcal W_0$ gives $2^{v/k}=2^u$.  Apply Corollary~\ref{cor:compare} in both directions.
\end{proof}

\begin{lemma}[Canonical profile closure]\label{lem:profileclosure}
Fix $l\ge1$.
\begin{enumerate}[label=(\alph*),leftmargin=2em]
 \item Every scaling-closed, power-cofinal width family is $\rho_{1,l}$-profile-closed.
 \item Every scaling-closed width family is $\rho_{0,l}$-profile-closed.
\end{enumerate}
\end{lemma}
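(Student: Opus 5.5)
Both parts are checked directly against the definition of $\rho$-profile closure, with the depth profiles supplied by Lemma~\ref{lem:duality}: $\delta_{\rho_{1,l}}(u;N)=\lceil u(N)/l\rceil$ and $\delta_{\rho_{0,l}}(u;N)=\lceil(2^{u(N)}-1)/l\rceil$. Fix $u\in\mathcal W$ and integers $d\ge0$, $C\ge1$. We must produce $v\in\mathcal W$ with $v\ge Cu$ and $\delta(v)\ge C\,\delta(u)\,u^d$ for all large $N$. Throughout, $u(N)\ge\mathbf L(N)\to\infty$, so inequalities that hold once $u$ exceeds a fixed constant hold for all sufficiently large $N$.

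For (a), the requirement is $\lceil v/l\rceil\ge C\lceil u/l\rceil u^d$. Since $\lceil u/l\rceil\le u$ for $u\ge1$, it suffices to have $v\ge lC\,u^{d+1}$. We proceed in three steps.
\begin{itemize}
\item Apply power cofinality with exponent $d+1$ to obtain $w\in\mathcal W$ and a constant $K$ with $u^{d+1}\le K w$ for large $N$.
\item Apply scaling closure to $w$ with constant $lCK$, giving $v_1\in\mathcal W$ with $v_1\ge lCK\,w\ge lC\,u^{d+1}$.
\item The inequality $v\ge Cu$ is then automatic for large $N$, because $u^{d+1}\ge u$.
\end{itemize}
The only subtlety is that power cofinality holds only asymptotically, with a hidden constant. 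Scaling closure absorbs that constant, and "sufficiently large $N$" is permitted by the definition.

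For (b), the requirement is $\lceil(2^{v}-1)/l\rceil\ge C\lceil(2^u-1)/l\rceil u^d$. The left side is at least $(2^v-1)/l$ and the right side is at most $C\,2^u u^d$. It therefore suffices to have $2^v\ge lC\,2^u u^d+1$, which holds once $v\ge u+(d+1)\log_2 u+\log_2(lC)+1$. Taking $v\ge 2u$ is enough for large $u$, since $(d+1)\log_2 u+O(1)\le u$ eventually. Scaling closure with constant $\max(2,C)$ gives $v\in\mathcal W$ with $v\ge\max(2,C)\,u$, which yields both conditions. No power cofinality is needed, which is why (b) assumes only scaling closure.

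The main point to get right is the bookkeeping of ceilings and the $-1$ in the clock. In (a), ceilings could make $\lceil u/l\rceil$ exceed $u/l$, but the bound $\lceil u/l\rceil\le u$ handles this cleanly. In (b), the exponential gap between $2^{2u}$ and $2^u u^d$ swamps all constants.
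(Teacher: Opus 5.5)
Your proof is correct and follows the same route as the paper: for the divisive row you apply power cofinality at exponent $d+1$ and then scaling closure, and for the predecessor row you take a sufficiently large fixed multiple of $u$ (your choice $\max(2,C)\,u$). The only difference is that you make explicit the constants, ceilings and the $-1$ in the clock, which the paper absorbs into the $\Theta$-estimates of Lemma~\ref{lem:duality}.
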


\begin{proof}
For $\rho_{1,l}$, Lemma~\ref{lem:duality} gives $\delta_{\rho_{1,l}}(u)=\Theta(u)$.  Power cofinality applied to $u^{d+1}$, followed by scaling closure, supplies a width term dominating both $Cu$ and $C\delta_{\rho_{1,l}}(u)u^d$.

For $\rho_{0,l}$, Lemma~\ref{lem:duality} gives $\delta_{\rho_{0,l}}(u)=\Theta(2^u)$.  Since every width term is at least $\mathbf L\ge2$, choosing by scaling closure a sufficiently large fixed multiple $v\ge m u$ makes $2^v$ dominate $C2^u u^d$ while also giving $v\ge Cu$.
\end{proof}

\begin{theorem}[Time--space representation]\label{thm:timespace}
Fix $l\ge1$ and $Q\in\{\D,\Ex,\All,\Alt\}$.
\begin{enumerate}[label=(\alph*),leftmargin=2em]
 \item If $\mathcal W$ is scaling-closed and power-cofinal, then
 \[
  \SR^Q_{1,l}[\mathcal W]=\mathsf{TIME}_Q(\mathcal W).
 \]
 \item If $\mathcal W$ is scaling-closed, then
 \[
  \SR^Q_{0,l}[\mathcal W]=\mathsf{SPACE}_Q(\mathcal W).
 \]
\end{enumerate}
\end{theorem}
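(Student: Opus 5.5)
The plan is to derive both parts from the resource factorization Theorem~\ref{thm:exactprofile} together with Lemma~\ref{lem:profileclosure}. In each case I would first apply Lemma~\ref{lem:profileclosure} to get the equality $\SR^Q_{j,l}[\mathcal W]=\mathsf{Prof}_Q(\rho_{j,l},\mathcal W)$. I would then identify the profile class with the pure time or pure space class, using Lemma~\ref{lem:duality} to evaluate $\delta_{\rho_{j,l}}(u)$. For modes $\All$, I would note that $\TISP_{\All}$ is the complement of $\TISP_{\Ex}$ at the same bounds (swap accept/reject on a total machine). This matches $\mathsf{TIME}_{\All}=\co\NTIME$ and $\mathsf{SPACE}_{\All}=\co\NSPACE$.

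For (a), Lemma~\ref{lem:duality} gives $\mathsf{Prof}_Q(\rho_{1,l},\mathcal W)=\bigcup_{u}\TISP_Q(\Theta(u),u)$. The inclusion into $\mathsf{TIME}_Q(\mathcal W)$ is immediate: the time bound is $O(u)$, and the constant is absorbed by scaling closure. For the reverse inclusion, a machine running in time $O(u)$ also uses at most $O(u)$ work cells, since it cannot touch more cells than it has steps. It therefore lies in $\TISP_Q(u',u')$ for a scaled $u'\in\mathcal W$. The $\mathbf L$ term in Lemma~\ref{lem:config} is harmless because $u\ge\mathbf L$.

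For (b), the profile is $\TISP_Q(\Theta(2^u),u)$. Membership in $\mathsf{SPACE}_Q(\mathcal W)$ is immediate, since we simply forget the time bound. Conversely, let $M$ be a total $Q$-mode machine using space $O(s)$ with $s\le u$. Lemma~\ref{lem:total} bounds every branch by $2^{c(u+\mathbf L)}\le 2^{2cu}$. Scaling closure then gives $v\ge 2cu$ in $\mathcal W$, so $M$ lies in $\TISP_Q(2^{v},v)$. This holds uniformly for all four modes, including alternation, because Lemma~\ref{lem:total} applies to every total finite-branching machine.

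The main point to check is this last step: the decider convention must make the branch-length bound from Lemma~\ref{lem:total} apply to alternating and nondeterministic space classes as defined. Since the paper fixes totality for every mode, this is available, and no Savitch- or Chandra--Kozen--Stockmeyer-type simulation is needed. What remains is bookkeeping of constants through scaling closure.
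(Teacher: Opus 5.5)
Your proposal is correct and follows the paper's proof: reduce to the exact profiles $\bigcup_u\TISP_Q(\Theta(u),u)$ and $\bigcup_u\TISP_Q(2^u,u)$ via Lemma~\ref{lem:duality}, Lemma~\ref{lem:profileclosure} and Theorem~\ref{thm:exactprofile}. Then use the fact that time bounds space for (a), and Lemma~\ref{lem:total} with scaling closure for (b); your remark that $\TISP_{\All}$ is the complement of $\TISP_{\Ex}$ at the same bounds just spells out bookkeeping the paper leaves implicit.
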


\begin{proof}
For row $1$, Lemmas~\ref{lem:duality} and~\ref{lem:profileclosure}(a), together with Theorem~\ref{thm:exactprofile}, give
\[
 \SR^Q_{1,l}[\mathcal W]
 =\bigcup_{u\in\mathcal W}\TISP_Q(\Theta(u),u).
\]
A branch-time bound $O(u)$ already bounds work space by $O(u)$, so the right side is exactly $\mathsf{TIME}_Q(\mathcal W)$.

For row $0$, Lemmas~\ref{lem:duality} and~\ref{lem:profileclosure}(b) give the exact profile
\[
 \SR^Q_{0,l}[\mathcal W]
 =\bigcup_{u\in\mathcal W}\TISP_Q(2^u,u).
\]
This is contained in $\mathsf{SPACE}_Q(\mathcal W)$.  Conversely, let a total machine use $O(u)$ space.  Lemma~\ref{lem:total} bounds every branch by $2^{O(u)}$ steps.  Scaling closure supplies $v\in\mathcal W$ whose predecessor profile $\Theta(2^v)$ dominates that bound and whose width stores the configurations.  Hence the language lies in the displayed profile class and therefore in $\SR^Q_{0,l}[\mathcal W]$.
\end{proof}

\section{Classical calibrations}

The structural theorems recover the usual classes as calibration points of the resource geometry. These identities are consequences rather than the primary results.  For every fixed $l\ge1$,
\[
\begin{array}{c|cc}
Q&\rho_{1,l}\text{ on }\mathcal W_0&\rho_{0,l}\text{ on }\mathcal W_0\\ \hline
\D&\Pclass&\PSPACE\\
\Ex&\NP&\PSPACE\\
\All&\co\NP&\PSPACE\\
\Alt&\PSPACE&\EXP
\end{array}
\]
and on logarithmic width the predecessor row gives $\Lclass,\NL,\co\NL,\Pclass$ for $Q=\D,\Ex,\All,\Alt$, respectively.  These follow from Theorem~\ref{thm:timespace}, $\NL=\co\NL$, Savitch's theorem, and the standard alternation identities \cite{Immerman1988,Szelepcsenyi1988,Savitch1970,CKS1981}.  For $k\ge1$, let $\Sigma_k^P$ and $\Pi_k^P$ denote polynomial-time alternating languages with at most $k-1$ alternation switches, beginning existentially and universally, respectively.  Restricting the divisive alternating row accordingly yields these classes because both simulations preserve node types and alternation count \cite{Oitavem2022}.

The same statement at higher width scales gives
\[
\begin{array}{c|cc}
Q&\rho_{1,l}\text{ on }\mathcal W_k&\rho_{0,l}\text{ on }\mathcal W_k\\ \hline
\D&k\text{-}\mathsf{EXPTIME}&k\text{-}\mathsf{EXPSPACE}\\
\Ex&\mathsf N k\text{-}\mathsf{EXPTIME}&k\text{-}\mathsf{EXPSPACE}\\
\All&\mathsf{coN}k\text{-}\mathsf{EXPTIME}&k\text{-}\mathsf{EXPSPACE}\\
\Alt&k\text{-}\mathsf{EXPSPACE}&(k+1)\text{-}\mathsf{EXPTIME}.
\end{array}
\]
For $\mathcal W_E$, the divisive deterministic, existential, and alternating rows are $\Eclass,\NE,\ESPACE$, and the predecessor deterministic, existential, and universal rows are $\ESPACE$.  All displayed calibrations follow from Lemma~\ref{lem:widths}, Savitch, and the standard alternation simulations $\ATIME(t)\subseteq\DSPACE(t)$, $\DSPACE(s)\subseteq\ATIME(s^{O(1)})$, and $\ASPACE(s)=\DTIME(2^{O(s)})$ at the relevant constructible scales. Polynomial distortions are absorbed by the width families \cite{CKS1981}.

\subsection{Counting}

Counting is a further aggregation extension of the same local dynamics, not part of the Boolean profile-equivalence theorem above.  It replaces the existential Boolean aggregator by addition,
\[
 C(x,c,y)=
 \begin{cases}
  a(x,c),&h(x,c)=1,\\
  0,&h(x,c)=0\text{ and }y=0,\\
  \displaystyle\sum_{i:\eta_i(x,c)=1}C(x,T_i(x,c),\rho(y)),&\text{otherwise}.
 \end{cases}
\]
Let $\SR^{\Cnt}_{\rho}[\mathcal W]$ be the resulting integer-valued functions.  Define $\#\Pclass$ as the functions counting accepting branches of a nondeterministic polynomial-time machine.

\begin{corollary}[Counting]
For every fixed $l\ge1$,
\[
 \SR^{\Cnt}_{1,l}[\mathcal W_0]=\#\Pclass.
\]
\end{corollary}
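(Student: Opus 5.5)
We prove the two inclusions separately, reusing the machinery of Theorem~\ref{thm:profile} with the Boolean aggregator replaced by addition. Throughout, $\rho=\rho_{1,l}$ and $u\in\mathcal W_0$, so by Lemma~\ref{lem:duality} the path length is $\delta_\rho(u)=\Theta(u)=N^{O(1)}$. Since $\eta$ ranges over a fixed tuple of $q$ guards, every node has at most $q$ children.

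\emph{Soundness ($\subseteq$).} Fix a dynamics $\mathfrak D$ of width $u$. Build a nondeterministic machine $M$ as in Theorem~\ref{thm:profile}(i). At a nonhalting node with clock $y>0$, it evaluates the guards, guesses an index $i$ with $\eta_i=1$, and moves to $(T_i(x,c),\rho(y))$. At a halting node it accepts iff $a(x,c)=1$, and at a node with $y=0$ that is not halting it rejects. By induction on $D_\rho(y)$, the number of accepting branches below $(c,y)$ is exactly $C(x,c,y)$: distinct enabled indices give distinct branches, so the sum corresponds to a disjoint union of branch sets.

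One point needs care. The guess of an index among the enabled guards must not create spurious branches. The index is guessed as a bounded bit string, and a branch is rejected if its guessed index is out of range or not enabled. Rejecting branches do not contribute to the count, so this is harmless. The path time is $O(\delta_\rho(u)\,u^{O(1)})=N^{O(1)}$. Hence $C(x,I(x),K_u)\in\#\Pclass$.

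\emph{Completeness ($\supseteq$).} Let $f(x)$ count the accepting branches of a nondeterministic polynomial-time machine $M$. We first normalize $M$ so that distinct accepting computations correspond bijectively to distinct sequences of chosen transition indices. This means that each nondeterministic step picks one index from a fixed list, and no two listed transitions give the same successor with the same label. This is the standard convention for $\#\Pclass$, so the count is unchanged. Lemma~\ref{lem:config} then supplies guards $\eta_i$ ("transition $i$ is applicable") and successors $T_i$ for this list. Take $a=1$ on accepting halting configurations and $a=0$ on rejecting ones.

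Next we choose the width. Take $v\in\mathcal W_0$ with $v\ge c_M(s+\mathbf L)$, and, using power-cofinality and scaling closure (Lemma~\ref{lem:profileclosure}(a)), with $\delta_\rho(v)\ge t(N)$. Then the clock never expires before a branch halts, so the branch $y=0$ of the definition of $C$ is never used on reachable nodes. The counting recursion $C$ therefore unfolds exactly the computation tree of $M$, and the same induction as above gives $C=f(x)$.

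\emph{Main obstacle.} The main obstacle is exactly this branch-counting fidelity. Boolean aggregation tolerates duplicated or merged branches, but addition does not. So both directions must check that guard-enabled successors correspond bijectively to machine branches. This includes the absorbing halting states and the overflow/rejecting states introduced in the encoding: they must contribute $0$ and must never duplicate a branch. I would handle this by defining $\#\Pclass$ via transition-index sequences and making overflow states halting with $a=0$.
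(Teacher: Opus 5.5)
Your proposal is correct and follows the same route as the paper. For $\subseteq$, both use a parsimonious nondeterministic simulation that guesses the successor index in binary and rejects disabled or out-of-range codes; for $\supseteq$, both compile the machine via Lemma~\ref{lem:config} at a $\mathcal W_0$ width large enough that the divisive clock never expires, and you spell out the branch-identity bookkeeping (transition-index sequences, overflow states contributing $0$) that the paper compresses into the word ``parsimonious.''
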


\begin{proof}
Polynomial width and the divisive descent give polynomial branch depth.  A simulator chooses one of the fixed successors and rejects disabled choices, so the simulation is parsimonious: accepting branches are preserved one-for-one. Fixed $q$-ary branching is converted to binary branching with unused codes rejecting.  Conversely, Lemma~\ref{lem:config} compiles any fixed nondeterministic polynomial-time machine and the counting recursion reproduces its accepting paths.  This is exactly Valiant's definition \cite{Valiant1979}.
\end{proof}

\section{Conclusion}

Step recursion admits a machine-resource factorization once state width and recursion descent are separated.  The deterministic bounded-state core is realized by a single ordinary bounded step recursion over a fixed finite basis, while deterministic, existential, universal, and alternating aggregation of the same local dynamics gives the corresponding finite-branching machine semantics.  After controlled width closure, the resulting classes are the machine time--space families determined by the depth--width profiles $(\delta_\rho(u),u)$.

For generalized-inverse descents, generator growth determines depth.  The principal structural output is the quotient under admissible width change: some depth curves collapse, while an explicit infinite strict hierarchy survives every polynomial width reparameterization between the canonical polynomial- and exponential-depth profiles.  Thus descent remains a nonredundant resource coordinate after polynomial storage overhead is factored out; standard complexity classes serve only as calibration points.

The algebra--machine bridge is deliberately limited to the bounded-state core and does not identify the full closure $\mathcal A_\rho(\mathcal C_{\rm bit})$ with a machine class.  This distinction keeps the resource theorem exact: width remains an explicit storage parameter, descent remains an explicit path-length parameter, and the local transition basis remains fixed.

\section*{Declaration of competing interest}
The author declares that he has no known competing financial interests or personal relationships that could have appeared to influence the work reported in this paper.

\section*{Data availability}
No data were used for the research described in this article.

\section*{Declaration of generative AI and AI-assisted technologies in the manuscript preparation process}
During the preparation of this work, the author used OpenAI ChatGPT for copyediting, grammatical correction, language polishing. The author reviewed and edited the output as needed and takes full responsibility for the content of the article.

\fontsize{7}{7.5}\selectfont

\end{document}